%%%% ijcai25.tex

\typeout{IJCAI--26 Instructions for Authors}

% These are the instructions for authors for IJCAI-25.

\documentclass{article}
\pdfpagewidth=8.5in
\pdfpageheight=11in

% The file ijcai25.sty is a copy from ijcai22.sty
% The file ijcai22.sty is NOT the same as previous years'
\usepackage{ijcai26}

% Use the postscript times font!
\usepackage{times}
\usepackage{soul}
\usepackage{url}
\usepackage[hidelinks]{hyperref}
\usepackage[utf8]{inputenc}
\usepackage[small]{caption}
\usepackage{graphicx}
\usepackage{amsmath}
\usepackage{amsthm}
\usepackage{booktabs}
\usepackage[linesnumbered,ruled,vlined]{algorithm2e}
\usepackage[switch]{lineno}

\usepackage{subfigure}
\usepackage{amssymb}
\usepackage{thm-restate}
\usepackage{enumerate}
\usepackage{xcolor}
\usepackage{bbm}
% Comment out this line in the camera-ready submission
%\linenumbers

\urlstyle{same}

% the following package is optional:
%\usepackage{latexsym}

% See https://www.overleaf.com/learn/latex/theorems_and_proofs
% for a nice explanation of how to define new theorems, but keep
% in mind that the amsthm package is already included in this
% template and that you must *not* alter the styling.

\newtheorem{theorem}{Theorem}
\newtheorem{definition}{Definition}
\newtheorem{fact}{Fact}

\usepackage{thm-restate}

% Following comment is from ijcai97-submit.tex:
% The preparation of these files was supported by Schlumberger Palo Alto
% Research, AT\&T Bell Laboratories, and Morgan Kaufmann Publishers.
% Shirley Jowell, of Morgan Kaufmann Publishers, and Peter F.
% Patel-Schneider, of AT\&T Bell Laboratories collaborated on their
% preparation.

% These instructions can be modified and used in other conferences as long
% as credit to the authors and supporting agencies is retained, this notice
% is not changed, and further modification or reuse is not restricted.
% Neither Shirley Jowell nor Peter F. Patel-Schneider can be listed as
% contacts for providing assistance without their prior permission.

% To use for other conferences, change references to files and the
% conference appropriate and use other authors, contacts, publishers, and
% organizations.
% Also change the deadline and address for returning papers and the length and
% page charge instructions.
% Put where the files are available in the appropriate places.

% PDF Info Is REQUIRED.

% Please leave this \pdfinfo block untouched both for the submission and
% Camera Ready Copy. Do not include Title and Author information in the pdfinfo section
\pdfinfo{
/TemplateVersion (IJCAI.2025.0)
}

\title{Time-Critical Adversarial Influence Blocking Maximization}

% Multiple author syntax (remove the single-author syntax above and the \iffalse ... \fi here)

\author{
    Jilong Shi$^{1,2}$\and
    Qiangpeng Fang$^{1,2}$\and
    Xiaobin Rui$^{1,2}$\and
    Jian Zhang$^1$\and
    Zhixiao Wang$^{*1,2}$
    \affiliations
    $^1$China University of Mining and Technology\\
    $^2$Graph Mining and Social Computing Lab
    \emails
    jlshi@cumt.edu.cn,
    fangqp@cumt.edu.cn,
    ruixiaobin@cumt.edu.cn,
    zhangjian10231209@cumt.edu.cn,
    zhxwang@cumt.edu.cn$^{*}$
}

\begin{document}

\maketitle

{\let\thefootnote\relax\footnotetext{* Corresponding author.}}

\begin{abstract}
\textit{Adversarial Influence Blocking Maximization} (AIBM) aims to select a set of positive seed nodes that propagate synchronously with the known negative seed nodes to counteract their negative influence. 
% AIBM can provide a comprehensive analysis of the actual phenomenon of opinion communication control. 
% The diffusion and dissemination of information in the real network can also be well modeled. 
Time factor plays a particularly vital role for many AIBM application scenarios. However, the AIBM problem with time constraint remains unexplored.
More importantly, existing AIBM studies have not thoroughly investigated the submodularity of the objective function, thereby failing to establish a theoretical approximation guarantee.
% Prior to the present study, no method existed that could fully solve the AIBM problem, and some generalized methods were deficient in terms of efficiency and effectiveness. 
To address these challenges, firstly, we establish the \textit{Time-Critical Adversarial Influence Blocking Maximization} (TC-AIBM), which explicitly incorporates time constraint.
Then, we provide a theoretical proof of the submodularity of the TC-AIBM objective function under three different tie-breaking rules.
%  providing a lower bound of $(1-1/e-\varepsilon)$ for the reference method. 
Finally, a \textit{Bidirectional Influence Sampling} (BIS) algorithm is proposed to solve the TC-AIBM problem.
% The algorithm is also capable of solving the traditional AIBM problem with improved effectiveness and greater efficiency. 
Leveraging the monotonicity and submodularity of the objective function, BIS achieves an approximation guarantee of $(1-1/e-\varepsilon)(1-\psi)$.
Comprehensive experiments on four real-world datasets demonstrate that the proposed BIS algorithm exhibits excellent robustness across various negative seeds, time constraint, and tie-breaking rules, outperforming state-of-the-art baselines. 
In addition, BIS is up to three orders of magnitude faster than the Greedy algorithm.
\end{abstract}
\section{Introduction}

\textit{Influence Maximization} (IM)~\cite{li2018influence} aims to identify a set of seed nodes to maximize the spread of influence in a social network.
However, the viral spread of harmful content poses growing risks to public discourse and social stability.
%Empirical studies have shown that the spread of negative influence in social networks can be up to six times faster than normal information~\cite{vosoughi2018spread}.
In this context, the \textit{Influence Blocking Maximization} (IBM)~\cite{chen2022influence} has emerged to counteract these negative influence.

There are two technical routes to solve the IBM problem: \textit{i) node immunization} removes a subset of nodes or edges from social networks to restrict the spread of negative influence; \textit{ii) adversarial propagation,} known as \textit{Adversarial Influence Blocking Maximization} (AIBM), aims to select a set of positive seed nodes to propagate synchronously with known negative seed nodes, counteracting the misinformation spread of these negative seeds.
AIBM has garnered widespread attention due to its exceptional performance.
For example, Arazkhani et al.~\cite{arazkhani2019efficient} propose an efficient AIBM algorithm based on community partitioning.
% , which significantly reduces runtime while maintaining performance. 
More recently, Chen et al.~\cite{chen2023neural} design a \textit{Neural Influence Estimator} (NIE) to efficiently evaluate the blocking ability of positive seeds. 
Wang et al.~\cite{wang2024method} maximize the suppression of negative influence by leveraging the computation of compound probabilistic events. 

Nevertheless, existing AIBM studies all fail to address time-critical scenarios.
In fact, time plays a particularly vital role in numerous real-world scenarios. 
For instance, in political campaigns, only the influence that propagates before the vote deadline can impact the election outcome. 
Similarly, in public emergencies, rumors or misinformation often proliferate rapidly within a short period of time. 
Once such a negative influence reaches a certain scale, subsequent interventions will become almost worthless. 
% \textcolor{red}{In the related domain of \textit{Immune Influence Blocking Maximization} (IIBM), studies have already recognized the importance of time and have modeled such time-sensitive tasks.
% For example, Chang et al.~\cite{chang2025time} incorporated the time factor into IIBM problem and addressed it using traditional sampling methods and heuristic approaches.
% In contrast, within the context of AIBM, existing algorithms fail to account for time-critical scenarios.} 
Therefore, it is imperative to integrate time constraint into the AIBM problem.

Furthermore, the submodularity of the AIBM objective has not been thoroughly investigated in prior work.
Although Budak et al.~\cite{budak2011limiting} proved that the AIBM problem is submodular under only negative dominance tie-breaking rule, the fact that they overlook the propagation of negative seeds in their final analysis directly reduces the AIBM problem to a maximum set-coverage problem. 
If the submodularity of the AIBM objective can be theoretically proved, the solution with the greedy algorithm will guarantee a $(1-1/e)$-approximation. 

Finally, most state-of-the-art AIBM methods rely only on Reverse Influence Sampling (RIS), which fails to capture the intrinsic nature of the AIBM problem, as direct selection of susceptible nodes as positive seeds does not effectively block the spread of negative influence.

To address these challenges, we first formally define the \textit{Time-Critical Adversarial Influence Blocking Maximization} (TC-AIBM) problem, along with its corresponding objective function.
Then, we provide a theoretical analysis of the submodularity of the objective function under three different tie-breaking rules. 
Finally, we propose a novel \textit{Bidirectional Influence Sample} (BIS) algorithm to address the TC-AIBM problem.

Our main contributions are summarized as follows.

%$\bullet\ $ We argue that the time factor is essential to scenarios under the AIBM problem.
%Therefore, we incorporate the time constraints into the classical IC model to better simulate such scenarios, and then formally define the TC-AIBM problem.
%Note that TC-AIBM naturally reduces to the original AIBM when the time factor is omitted.
$\bullet\ $ We argue that the time constraint is essential for the AIBM problem, and formally define the TC-AIBM problem.
We conduct a theoretical analysis of the submodularity of the TC-AIBM objective under three different tie-breaking rules.
Our proofs establish a $(1-1/e)$-approximation guarantee for the greedy-based approach, thereby ensuring its theoretical effectiveness in solving the TC-AIBM problem.
%Note that TC-AIBM naturally reduces to the original AIBM when the time factor is omitted.

%Our proofs establish an $(1-1/e-\varepsilon)$ approximation guarantee for the greedy-based approach, thereby ensuring its theoretical effectiveness in solving the problem. 
% The submodularity of the objective function provides a horizontal benchmark to measure the performance of the comparison methods.

% $\bullet\ $In order to better solve TC-AIBM problems, we design and implement an algorithm that combines forward and reverse sampling to identify vulnerable nodes and blocking paths through Bidirectional Influence Sample (BIS), and ultimately iteratively selects the desired positive seeds. BIS is also applicable to AIBM problems without time criticals.
$\bullet\ $ We propose a novel Bidirectional Influence Sampling (BIS) algorithm to address the TC-AIBM problem. 
Our algorithm combines Forward Influence Sampling (FIS) and Reverse Influence Sampling (RIS), effectively reformulates the seed selection process into a weighted maximum coverage problem. 
Furthermore, we theoretically analyze the number of forward and reverse samples to ensure that the BIS algorithm achieves the desired $(1-1/e-\varepsilon)(1-\psi)$-approximation.

$\bullet\ $ We confirm the performance of BIS through comprehensive experiments on four real-world datasets. 
BIS exhibits excellent stability with various negative seeds, time constraint, and tie-breaking rules. 
Additionally, BIS achieves an improvement of efficiency up to three orders of magnitude over greedy algorithms.
\section{Preliminaries}
In this section, we first introduce the \textit{Time-Critical Independent Cascade} (TC-IC) model, which serves as the foundation of our work.
Then, we formally define the \textit{Time-Critical Adversarial Influence Blocking Maximization} (TC-AIBM) problem.
\subsection{Time-Critical Diffusion Model}

% \begin{figure}[!b]
% \centering
% \subfigure[t=0]{
%     \label{tc-ex1}
%     \includegraphics[width=0.14\textwidth]{images/tc-example1.pdf}
% }
% \subfigure[t=3]{
%     \label{tc-ex2}
%     \includegraphics[width=0.14\textwidth]{images/tc-example2.pdf}
% }
% \subfigure[t=4]{
%     \label{tc-ex3}
%     \includegraphics[width=0.14\textwidth]{images/tc-example3.pdf}
% }
% \caption{Importance of time factor.}
% \label{tc}
% \end{figure} 

% In this paper, we build our work based on the \textit{Independent Cascade} (IC) model, which is widely used to describe information diffusion in social networks \cite{kempe2003maximizing,li2018influence,borgs2014maximizing,wang2024efficient}.
% Then, we introduce the \textit{Time-Critical Independent Cascade} (TC-IC) model, an extension of the traditional IC model that incorporates time constraints into the diffusion process to better describe time-critical tasks.
% The following Definition~\ref{def:tc-ic} provides a formal description of the TC-IC model.

Time-Critical Diffusion Model incorporates time constraint into the propagation process to better align with time-critical scenarios. Prominent examples include the Continuous-Time Independent Cascade (CTIC) model~\cite{du2017scalable,rodriguez2011uncovering} and the Time-Constrained Linear Threshold (TCLT) model~\cite{chen2012time}. In this paper, we focus on the IC-based time-critical diffusion models. We give a general definition of the Time-Critical Independent Cascade (TC-IC) model below.
% IC model describes the process of information diffusion among nodes in social networks, which is widely used in multiple studies \cite{wen2017online,hajdu2018community,chen2021negative,li2021estimating}. \textcolor{red}{Time-Critical Independent Cascade (TC-IC) model is an extension of the traditional IC model that incorporates time constraint into the diffusion process to better describe time-critical tasks. The definition of the TC-IC model is given in Definition~\ref{def:tc-ic}.}
% However, the traditional IC model does not involve time constraint and is therefore ill-suited for time-critical scenarios. 
% Take Fig. \ref{tc-ex1} as an example, if the leftmost red node is selected as a seed and the final propagation result is shown in Fig. \ref{tc-ex3}, where all nodes are activated. 
% However, the activation result will be quite different if the time constraint is set to 3 (Fig. \ref{tc-ex2}). 
% Therefore, the time factor should be considered in the IC model to better describe time-critical scenarios.

\begin{definition}[Time-Critical Independent Cascade model]\label{def:tc-ic}
The Time-Critical Independent Cascade (TC-IC) model constitutes a time-critical stochastic diffusion process defined over a directed graph $G = (V, E)$ with activation probabilities $p: E \rightarrow [0,1]$. 
Given an initial seed set $S \subseteq V$ activated at $t=0$, at each subsequent timestep $t \geq 1$, each node $u$ activated at $t-1$ attempts to activate its inactive neighbors $v \in \mathcal{N}(u)$ with activation probability $p(u,v)$, where $\mathcal{N}(u) = \{v \in V \mid (u,v) \in E\}$ denotes the neighborhood of $u$. 
The process ends at $t = \min(\tau, t_{max})$, where $\tau$ refers to the predefined time constraint, and $t_{max}$ denotes the maximum timestep $t$ when no new activated nodes appear. 
\end{definition} 

%It should be emphasized that although this paper describes our work based on this model, it is also applicable to other time-critical diffusion models.

\subsection{Time-Critical AIBM} 
AIBM involves two seed sets: the known negative seed set $S$, and the positive seed set $A$ to be selected. 
Each node has three possible states: \textit{i) positively activated}, \textit{ii) negatively activated}, \textit{iii) inactive}. 
At timestep $t = 0$, each seed in $S$ and $A$ simultaneously attempts to activate its inactive neighbor nodes in the graph. 
When a node $v\in V\backslash\{S\cup A\}$ receives positive and negative information from its neighbors at the same time $t$, different tie-breaking rules can be applied to determine its activation status.
Generally, there are three common tie-breaking rules: \textit{1) positive dominance (PD), receive positive information}; \textit{2) negative dominance (ND),receive negative information} and \textit{3) fixed dominance (FD), based on the predefined priority sequence, receive the information from the higher one.}

Despite different tie-breaking rules, we can give a general definition of Negative Influence Reduction, the naive objective of the AIBM problem.
\begin{definition}[Negative Influence Reduction]\label{def:reduction}
Let $S$ and $A$ denote the known negative seed set and the selected positive seed set, respectively. 
% The time constraint is $\tau$. $\sigma(S,A)$ represents the number of nodes activated by $S$ under simultaneous propagation of $S$ and $A$ within $\tau$. 
$\sigma(S,A)$ denotes the number of nodes activated by $S$ when $S$ and $A$ propagates simultaneously, and $\sigma(S,\varnothing)$ represents the number of nodes activated by $S$ when only $S$ propagates.
Then, \textbf{Negative Influence Reduction $\sigma^-$} for the positive seed set $A$ is:
\begin{equation}\label{sigma-}
    \sigma^-(A) = \sigma(S,\varnothing) - \sigma(S,A).
\end{equation}
\end{definition}

% \begin{figure}[!b]
% \centering
% \subfigure[Negative seeds $S$]{
%     \label{modelex1}
%     \includegraphics[width=0.14\textwidth]{images/model_ex1.pdf}
%  }
% \subfigure[Diffusion of $S$]{
%     \label{modelex2}
%     \includegraphics[width=0.14\textwidth]{images/model_ex2.pdf}
% }
% \subfigure[Diffusion of $S$,$A$]{
%     \label{modelex3}
%     \includegraphics[width=0.14\textwidth]{images/model_ex3.pdf}
% }
% \caption{Negative influence reduction.}
% \label{model}
% \end{figure}

% Fig.~\ref{model} gives an example of the AIBM problem.
% The red nodes in Fig.~\ref{modelex1} are the known negative seed set $S$, and Fig.~\ref{modelex2} illustrates its corresponding influence spread under $S$ alone. 
% Fig.~\ref{modelex3} shows the activation status of each node when the negative seeds $S$ (dark red nodes) and the selected positive seed set $A$ (dark blue nodes) propagate simultaneously.
% Light red nodes indicate those activated by $S$, while light blue nodes denote those activated by $A$.
% Compared to Fig.~\ref{modelex2}, the influence spread of the negative seeds is blocked significantly, and AIBM aims to block this negative influence as much as possible.

However, in scenarios such as public emergencies or rumor outbreaks, it is not only essential to block the spread of negative influence, but also to block it within a limited time.
This gives rise to TC-AIBM, as given by the following Definition~\ref{def:TCAIBM}.
\begin{definition}[Time-Critical Adversarial Influence Blocking Maximization]\label{def:TCAIBM}
Let $\tau$ indicate the time constraint defined in the TC-IC model (Definition~\ref{def:tc-ic}).
$\sigma^-_\tau(A)=\sigma_\tau(S,\varnothing)-\sigma_\tau(S,A)$ denotes Negative Influence Reduction under the TC-IC model with time constraint $\tau$, where $S$ and $A$ correspond to the negative seed set and the positive seed set, respectively.
The objective of TC-AIBM is to find an optimal positive seed set $A$ of size $k$ that maximizes the $\sigma_\tau^-$, i.e.,
\begin{equation}\label{opt-tc}
    A^*=\mathop{\arg\max}_{A\in V\backslash S,\ |A|\leq k,\ t\leq \tau} \ \sigma^-_\tau(A).
\end{equation}
\end{definition}

This problem requires achieving blocking before the time constraint $\tau$, making it more challenging than traditional AIBM.
In the following, we will thoroughly analyze the problem and develop an appropriate solution.
\section{Theoretical Analysis}
Monotonicity and submodularity of objective $\sigma^-_\tau$ are the basis for designing a solution with an approximation guarantee to TC-AIBM problem. It is evident that the addition of a positive seed never increases the spread of negative influence, thereby confirming the monotonicity of objective $\sigma^-_\tau$.

In the following, we conduct a detailed theoretical analysis of the submodularity of $\sigma^-_\tau$ under three distinct tie-breaking rules. We begin with the definition of submodularity.

\begin{definition}[Submodularity]
    A set function $f: 2^\Omega \to \mathbb{R}$ is called submodular if for any subset $E \subseteq F \subseteq \Omega$ and any element $x \in \Omega \setminus F$, it holds that
    \begin{equation}
        f(E \cup \{x\}) - f(E) \geq f(F \cup \{x\}) - f(F).
    \end{equation}
\end{definition}

This property captures the principle of diminishing returns, wherein the marginal gain of adding the same element to a smaller set is at least as large as adding it to a larger set. 

\begin{restatable}{lemma}{lemmaone}
\label{lem:reach}
In Time-critical Independent Cascade (TC-IC) model, let $A$ be the selected postive seed set, $S$ be the known negative seed set, and $d_L(v, w)$ denote the shortest path distance from node $v$ to $w$ in a live-edge graph $L$. Let $n_s = \operatorname*{argmin}_{u \in S} d_L(v, w)$ be the negative seed closest to $w$ in $L$. Under different tie-breaking rules, the node $w$ is saved if and only if:

    \noindent Positive Dominance (PD): 
    \begin{equation}\label{positive_dom}
    \begin{gathered}
        % \exists v \in A, 
        d_L(v,w) \leq d_L(n_s,w) \Leftrightarrow w \text{ can be saved}.
    \end{gathered}
    \end{equation}
    \noindent Negative Dominance (ND):
    \begin{equation}\label{negative_dom}
    \begin{gathered}
        % \exists v \in A, 
        d_L(v,w) < d_L(n_s,w) \Leftrightarrow w \text{ can be saved}.
    \end{gathered}
    \end{equation}
    \noindent Fixed Dominance (FD):
    \begin{equation}\label{Fixed_dom}
    \begin{gathered}
        \begin{cases}
            d_L(v,w) \leq d_L(n_s,w), & \gamma(v) > \gamma(n_s) \\
            d_L(v,w) < d_L(n_s,w), & \gamma(v) < \gamma(n_s)
        \end{cases} \\
        \Leftrightarrow w \text{ can be saved}
    \end{gathered}
    \end{equation}
    where $\gamma$ represents a predetermined random priority order function,  $\gamma(v)$ and $\gamma(n_s)$ refer to the priority of $v$ and $n_s$, respectively.
\end{restatable}

Please refer to the Appendix A.1 for detailed proofs.

\begin{restatable}{theorem}{theoremone}
The objective function $\sigma^-_\tau$ in the TC-AIBM problem is submodular with either of the following tie-breaking rules: Positive Dominance, Negative Dominance, or Fixed Dominance.
\end{restatable}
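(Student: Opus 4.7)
The plan is to prove submodularity pointwise on each realization of the live-edge graph and lift to $\sigma_\tau^-$ by linearity of expectation. Specifically, decompose the TC-IC process by sampling, for each edge $(u,v) \in E$, whether it is live with probability $p(u,v)$ independently; this yields a random subgraph $L$ such that $\sigma_\tau^-(A) = \mathbb{E}_L[\sigma_\tau^{-,L}(A)]$, where $\sigma_\tau^{-,L}$ is the deterministic negative-influence reduction on $L$ under the TC-IC dynamics. Since a non-negative linear combination of submodular functions is submodular, it suffices to show, for every fixed $L$, every $A \subseteq B \subseteq V \setminus S$, and every $x \in V \setminus (B \cup S)$, that $|\mathrm{Saved}(x \mid A, L)| \geq |\mathrm{Saved}(x \mid B, L)|$, where $\mathrm{Saved}(x \mid A, L)$ denotes the set of nodes negatively activated under $(S, A)$ but not under $(S, A \cup \{x\})$ on $L$.

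Fix $L$ and write $d_T(v)$ for the length of the shortest directed path in $L$ from the set $T$ to $v$ (with $d_T(v) = \infty$ if no such path of length $\leq \tau$ exists); note that $d_T$ is antimonotone in $T$. Under Positive Dominance and Negative Dominance, a routine induction on $d_{S \cup A}(v)$ shows that $v$'s final color depends only on $(d_S(v), d_A(v))$: under Positive Dominance $v$ is negatively activated iff $d_S(v) \leq \tau$ and $d_S(v) < d_A(v)$, and under Negative Dominance iff $d_S(v) \leq \tau$ and $d_S(v) \leq d_A(v)$. Substituting $d_{A \cup \{x\}}(v) = \min(d_A(v), d_x(v))$ yields an explicit membership condition for $\mathrm{Saved}(x \mid A, L)$ that combines restrictions independent of $A$ with a threshold of the form $d_S(v) < d_A(v)$ (or $\leq$). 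Because $A \subseteq B$ implies $d_A(v) \geq d_B(v)$, this threshold is weaker for $A$ than for $B$, so $\mathrm{Saved}(x \mid B, L) \subseteq \mathrm{Saved}(x \mid A, L)$ and the desired inequality follows.

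The Fixed Dominance case is the main obstacle because $v$'s color is no longer determined by $(d_S(v), d_A(v))$; it depends on the canonical parent chain induced by the global priority order. The plan is to first establish a monotonicity lemma: whenever $A \subseteq A'$, every node positively activated under $(S, A)$ is also positively activated under $(S, A')$. I would argue this by induction on $d_{S \cup A}(v)$, splitting on whether $d_{S \cup A'}(v)$ equals $d_{S \cup A}(v)$ or is strictly smaller. The key structural observation is that any in-neighbour entering the critical BFS level $d_{S \cup A'}(v) - 1$ that was previously above it must be reached via $A'$ rather than $S$ (a negative reach at that depth would induce an $S$-to-$v$ path contradicting the assumed distances), so its color under $(S, A')$ is positive; combined with the inductive hypothesis applied to the old highest-priority positive parent, this forces the new canonical parent of $v$ to remain positive. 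With the monotonicity lemma in hand, $\mathrm{Saved}(x \mid B, L) \subseteq \mathrm{Saved}(x \mid A, L)$ follows by contrapositive reasoning analogous to the other two rules: any $v$ saved from the larger set $B$ is negative under $B$, hence negative under $A$, and the positive witness available under $B \cup \{x\}$ descends to a positive witness under $A \cup \{x\}$. The delicate bookkeeping of how priorities interact with the shifting critical-level in-neighbour set is where the argument demands the greatest care.
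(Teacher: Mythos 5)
Your live-edge decomposition and the Positive/Negative Dominance cases are sound and essentially coincide with the paper's argument: the paper also samples a live-edge graph $L$, characterizes "being saved" via shortest-path comparisons ($|SP_L(v,w)|\leq|SP_L(n_s,w)|$ for PD, strict for ND, both truncated at $\tau$), and then concludes. The only cosmetic difference is that you verify diminishing returns directly on the marginal saved sets, whereas the paper packages the same distance characterization as a reduction to maximum set coverage (each candidate $c$ covers the set of pre-infected nodes it can save, so the per-realization objective is a coverage function). These are equivalent; your version makes explicit the key fact that, under PD/ND, whether $x$ saves $w$ depends only on $(d_x(w),d_S(w))$ and not on the rest of $A$, which is exactly what makes the coverage reduction legitimate.

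The Fixed Dominance case, however, has a genuine gap. Your monotonicity lemma, even if fully proved, gives $\mathrm{Neg}(B\cup\{x\})\subseteq\mathrm{Neg}(A\cup\{x\})$ for $A\subseteq B$, which is the wrong direction for the step you need: knowing $v\notin\mathrm{Neg}(B\cup\{x\})$ tells you that $v$ avoids the \emph{smaller} of the two negative sets, so you cannot conclude $v\notin\mathrm{Neg}(A\cup\{x\})$. The phrase "the positive witness available under $B\cup\{x\}$ descends to a positive witness under $A\cup\{x\}$" is precisely the missing argument, not a consequence of monotonicity: shrinking the positive seed set increases $d_{S\cup A}$, reshuffles the BFS layering, and therefore changes which in-neighbors compete at each tie and with which colors, so a priority-winning path under $B\cup\{x\}$ need not remain winning under $A\cup\{x\}$. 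What is actually required is the same decoupling that powers the PD/ND cases --- a pairwise characterization of "$u$ can save $w$" that does not depend on the other positive seeds, so that the saved set becomes $\bigcup_{u\in A}R_L^{-1}(u)$, i.e.\ a coverage function. The paper obtains this by stating Fixed Dominance at the level of a predetermined priority between the competing \emph{seeds/paths} at $w$ (its Eq.~(4)), under which the condition for $u$ to save $w$ again involves only $|SP_L(u,w)|$, $|SP_L(n_s,w)|$, and the fixed priorities. Your reading, where priority is resolved recursively at each intermediate in-neighbor, is the harder (and arguably more faithful) semantics, but the proposal as written does not close it; you would need either to prove the decoupling property under that semantics or to adopt the seed-level priority formulation the paper uses.
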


\begin{proof}
    \textit{TC-AIBM to Set Maximum Coverage.} Lemma \ref{lem:reach} delineates the conditions for saving a node $w$ under three tie-breaking rules. Consequently, we can characterize the eligible seeds $v$ using Eqs.~(\ref{positive_dom}), (\ref{negative_dom}), and (\ref{Fixed_dom}).
    
    Let $I_L=\bigcup\Gamma(n_s, L, \tau)$, where $L$ is a randomly sampled live-edge graph and $\Gamma(n_s, L, \tau)$ denotes the set of nodes that can be reached from $n_s$ in $L$ within $\tau$.
    Note that $I_L$ represents the set of nodes that are influenced by $S$ with the absence of positive seeds, thus we call $I_L$ the \textit{pre-infected nodes}. 
    % Define $I=\{q|q\in OP, q\notin S\}$, where $OP$ is the set of nodes that are ultimately infected. $I$ is a practical scene represents the set of nodes that will be infected by the negative seed without constraint; we call them \textit{pre-infected nodes}. 
    Based on Lemma~\ref{lem:reach}, we define $R_L(w)=\{c\in V\backslash S \mid w \text{ can be saved by } c\}$. 
    Then, for each $c_i\in V\backslash S$, $i= 1, 2, \cdots, |V\backslash S|$, we can find the corresponding set $C_{c_i}(L)=\{ w\in I_L \mid c_i\in R_L(w) \}$. 
    This yields a collection of sets $\mathcal{C}=\{C_{c_1},C_{c_2},\cdots,C_{c_i}\}$. 
    Thus, for a specific live-edge graph $L$, the TC-AIBM reduces into:
    \begin{equation}
    \begin{aligned}
        & \underset{A \subseteq V \setminus S}{\text{maximize}}
        & & N(L, A) = \bigg| \bigcup_{u \in A} C_u(L) \bigg| \\
        & \text{subject to}
        & & |A| \leq k
    \end{aligned}
\end{equation}
    where $N(L, A)$ denotes the number of nodes saved by positive seeds $A$ in live-edge graph $L$.

    As established in the literature~\cite{feige1998threshold}, the objective function of the Maximum Set Coverage problem is submodular. 
    Consequently, the function $N(L, A)$ is submodular.
    Finally, we extend this property to the global objective function $\sigma^-_\tau(A)$. 
    The objective function of TC-AIBM is defined as the expected number of saved nodes over all possible live-edge graphs:
    \begin{equation}
        \sigma^-_\tau(A) = \mathbb{E}_{L \sim G}[N(L, A)] = \sum_{L} \Pr(L) \cdot N(L, A)
    \end{equation}
    where $\Pr(L)$ is the probability of realization $L$, $L \sim G$ signifies that the live-edge graph $L$ is generated by independently retaining each edge $e \in E$ with probability $p_e$. 
    The weighted sum $\sigma^-_\tau(A)$ is submodular because it is an expectation over $N(L, A)$, and the class of submodular functions is closed under non-negative linear combinations.
    This concludes the proof.
\end{proof}

When the objective function $f$ is monotone and submodular, the greedy algorithm achieves a $(1 - 1/e)$-approximation guarantee~\cite{cornuejols1977exceptional,nemhauser1978analysis} for the problem.

% \begin{theorem}\label{lower bound}\cite{nemhauser1978analysis}
% Let \( S^* = argmax_{|S|\leq k}f(S) \) be the set maximizing f(S) among all sets with size at most k. 
% Let \( S^g \) be the solution returned by \text{Greedy}\((f, k)\) with $k$ as the budget. 
% If \( f \) is a monotone, submodular, and non-negative set function, then \( S^g \) satisfies:
% \begin{equation}\label{appro}
% f(S^g) \geq \left(1 - \frac{1}{e} \right)f(S^*),
% \end{equation}
% where e is the base of the natural logarithm.
% \end{theorem}

% Theorem \ref{lower bound} shows that under the TC-AIBM problem, selecting the positive seeds by the greedy algorithm provides at least $(1-1/e)$ of the optimal solution.
\section{Methodology}
In this section, we propose a novel \textit{Bidirectional Influence Sampling}(BIS) algorithm to address the TC-AIBM problem.
BIS integrates \textit{Forward Influence Sampling} (FIS) and \textit{Reverse Influence Sampling} (RIS), effectively transforming the positive seed selection process into a weighted maximum coverage problem. 
Furthermore, we theoretically deriving the sufficient number of forward and reverse samples to ensure that BIS achieves the desired approximation guarantee.

\begin{figure*}
    \centering
    \includegraphics[width=0.99\linewidth]{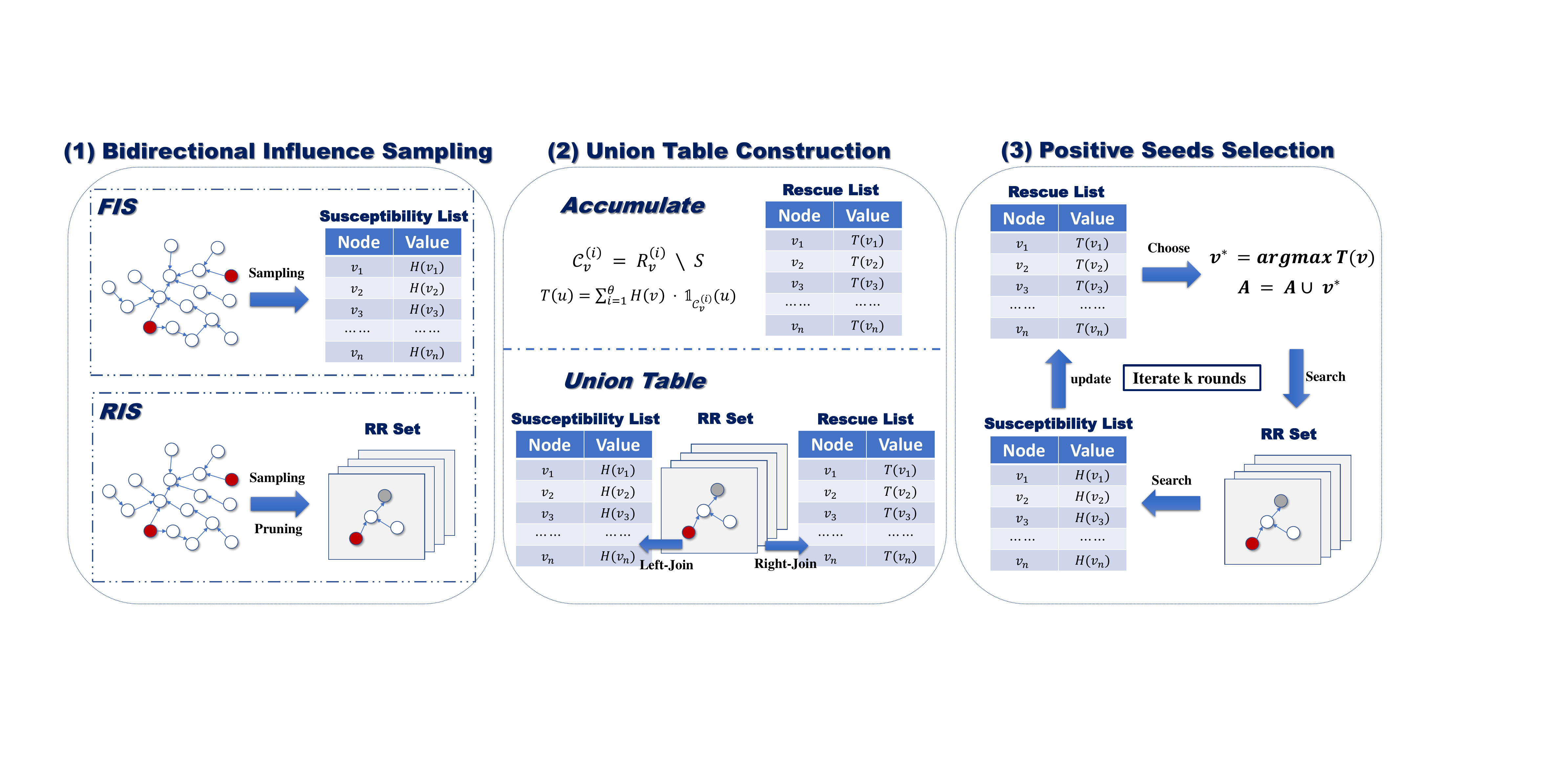}
    \caption{Framework of the proposed Bidirectional Influence Sampling (BIS) algorithm. (1) Bidirectional influence sampling phase generates Susceptibility List and RR Set, (2) Union Table construction phase builds Rescue List and Union Table, (3) Positive seeds selection phase iteratively selects positive seeds and updates Rescue List based on Union Table.}
    \label{fig:roadmap}
\end{figure*}

Fig.~\ref{fig:roadmap} illustrates the framework of our proposed BIS algorithm, which comprises three primary phases.
Firstly, bidirectional influence sampling phase is used to estimate the susceptibility of each non-negative seed node, and generate pruned RR sets based on different tie-breaking rules.
% In order to improve the collective blocking ability of the $k$ positive seed nodes, we should avoid blocking the same hazardous nodes. 
Subsequently, union table construction phase is used to estimate the rescue ability of each candidate node and build the union table for rapid iterative updates.
Finally, in the positive seeds selection phase, iteratively selecting the $k$ optimal positive seeds.

\subsection{Bidirectional Influence Sampling (BIS)}
\textit{Bidirectional Influence Sampling} (BIS) is a hybrid framework that integrates \textit{Forward Influence Sampling} (FIS) and \textit{Reverse Influence Sampling} (RIS).
By leveraging the complementary strengths of both strategies, BIS enables the precise identification of positive seeds capable of effectively intercepting the influence spread initiated by known negative seeds.

\subsubsection{Forward Influence Sampling (FIS)} 
FIS estimates the probability of each node $v\in V$ being influenced by the negative seed set $S$ through Monte Carlo simulation~\cite{weller2010monte}.
Specifically, FIS quantifies the susceptibility value $H(v)$ for each non-negative seed node as
% \textit{Susceptibility List} $H(v)$ for all non-negative seed nodes is generated by Forward Influence Sampling as
\begin{equation}
H(v)=\frac{1}{\phi}\sum_{i=1}^{\phi}\mathbbm{1}_{A^{(i)}}(v),
\end{equation}
where \(\phi\) denotes the total number of simulations, and $A^{(i)}$ represents the set of nodes infected by $S$ in the $i$-th simulation. 
The indicator function $\mathbbm{1}_{A^{(i)}}(v)$ equals $1$ if $v \in A^{(i)}$, and $0$ otherwise.
All susceptibility values $H(v)$ are stored in the Susceptibility List $H$.
The detailed algorithmic procedure of FIS is provided in the Appendix A.2.

% \begin{algorithm}[!ht]
% \caption{Forward Influence Sampling}
% \label{alg:forward_sampling}

% \SetKwInOut{Input}{Input}
% \SetKwInOut{Output}{Output}

% \Input{Graph $G = (V, E)$, activation probabilities $p: E \to [0,1]$, negative seeds $S$, number of simulations $\phi$, time constraint $\tau$}
% \Output{Estimated Susceptibility List $H$}

% Initialize Susceptibility List $H = \emptyset$, susceptibility value $H(v) \gets 0$, susceptibility counts $\overline H(v) \gets 0$\;

% \For{$i \gets 1$ \KwTo $\phi$}{
%     $Q \gets S$, $Q_{\text{new}} \gets S$, $t \gets 0$\;
    
%     \While{$Q_{\text{new}} \neq \emptyset$ \textbf{and} $t < \tau$}{
%         $Q_{\text{next}} \gets \emptyset$\;
        
%         \For{$u \in Q_{\text{new}}$}{
%             \For{$v \in \text{Neighbors}(u) \setminus Q$}{
%                 \If{$\text{rand}() \leq p(u,v)$}{
%                     $Q_{\text{next}} \gets Q_{\text{next}} \cup \{v\}$\;
%                     $\overline H(v) \gets \overline H(v) + 1$\;
%                 }
%             }
%         }
%         $Q \gets Q \cup Q_{\text{next}}$, $Q_{\text{new}} \gets Q_{\text{next}}$\;
%         % $Q_{\text{new}} \gets Q_{\text{next}}$\;
%         $t \gets t + 1$\;
%     }
% }

% \For{$v \in V \setminus S$}{
%     $H(v) \gets \overline H(v)/\phi$\;
%     $H \gets H\cup H(v)$\;
% }

% \Return $H$
% \end{algorithm}

\subsubsection{Reverse Influence Sampling (RIS)}
RIS~\cite{ris} is a widely adopted technique in IM problem that generates RR sets to estimate the influence spread.
The detailed procedure of RIS algorithm is presented in the Appendix A.3.

In TC-AIBM, the initial RR sets generated by RIS must be pruned based on different tie-breaking rules.
Let $v$ be the starting node (root node) of the RIS, and $N_h$ be the set of nodes sampled at timestep $h$. 
We define $h^*$ as the timestep where the negative seed set $S$ is encountered for the first time. 
Consequently, let $B = N_{h^*} \cap S$ denote the negative seed nodes encountered at this critical timestep.
Based on these definitions, we prune the initial RR sets according to three different tie-breaking rules as follows:
\begin{itemize}
\setlength{\itemsep}{0pt}
\setlength{\parsep}{0pt}
        \item \textit{Positive dominance.}
        % \begin{equation}      
        $\mathcal{R}^{\text{PD}}(v) = \cup_{h=0}^{h^{*}} N_h$,
        % \end{equation}
    % \end{itemize}
    % \begin{itemize}
        \item \textit{Negative dominance.} 
        %For this sampled RR set, although timestep $h^*$ (h=2) samples non-negative seed nodes 3,4, they will arrive at node 1 at the same timestep as negative seed 5. Under the tie-breaking rule of negative dominance, node 1 will receive negative information from node 5. So nodes 3, 4 cannot save target node 1 and should be discarded from the RR set.
        % \begin{equation}
        $\mathcal{R}^{\text{ND}}(v) =  \cup_{h=0}^{h^{*}-1} N_h \cup B$,
        % \end{equation}
    % \end{itemize}
    % \begin{itemize}
        \item \textit{Fixed dominance.}
        % \begin{equation}
        $\mathcal{R}^{\mathrm{FD}}(v) = \cup_{h=0}^{h^{*} - 1} N_h  \cup B \cup Z$,
        % \end{equation}
    \end{itemize}
\noindent where $\mathcal{R}^{\text{PD}}(v)$, $\mathcal{R}^{\text{ND}}(v)$, and $\mathcal{R}^{\text{FD}}(v)$ refer to the pruned RR sets under three tie-breaking rules. 

For Fixed Dominance, the set $Z$ is defined as $Z=\left\{ u \in N_{h^{*}} \mid \gamma(u) > \gamma(b^*) \right\}$, $\gamma$ denotes the priority function, $b^*$ represents the node with the highest priority in the set $B$, and $u$ represents non-negative seed nodes in timestep $h^*$.

To ensure that only valid positive seeds are selected, The candidate node set is defined as
\begin{equation}
    \mathcal{C}_v^{(i)}
= \mathcal{R}(v)^{(i)} \setminus S.
\end{equation}

The final RR sets can be denoted as
\begin{equation}
    \mathcal{RR}
=\bigl\{\mathcal{R}(v)^{(i)} \mid v\in V\setminus S,\;i=1,\dots,\theta\bigr\},
\end{equation}
where $\mathcal{RR}$ denotes all RR sets obtained through RIS after pruning, \(\theta\in\mathbb{N}^+\) refers to the number of RIS, and \(\mathcal{R}(v)^{(i)}\) denotes the \(i^{th}\) RR set sampled from the head node \(v\).

Thus, the final RR sets $\mathcal{RR}$ are obtained after all RR sets are pruned according to the specified tie-breaking rules.

\subsubsection{Union Table Construction}
Based on the above Susceptibility List $H$ and RR sets $\mathcal{RR}$, we generate the Rescue List and Union Table, where the former is used to quantify the rescue ability of candidate nodes, and the latter is used to update the Rescue List in time and avoid rescuing the same nodes when choosing positive seeds.
%Rescue List is obtained by integrating the susceptibility of each non-negative seed node from the Susceptibility List with the structural information provided by the RR sets. 

The Rescue List $T$ is obtained by adding the susceptibility value \(H(v)\) of head node $v$ as
% \begin{comment}
% \[
% S(u) \;\mathrel{+}=\, H(v),
% \quad
% \forall\, u \in \mathcal{C}_v^{(i)}.
% \]
% \end{comment}
\begin{equation}
T(u)
= \sum_{i = 1}^{\theta} H(v)\cdot \mathbbm{1}_{\mathcal{C}_v^{(i)} }(u),
\end{equation}
where $\mathbbm{1}_{\mathcal{C}_v^{(i)} }(u)$ is the indicator function which equals $1$ if $u \in \mathcal{C}_v^{(i)}$, and $0$ otherwise.

Repeating this process for each RR set in \(\mathcal{RR}\) yields Rescue List, forming the basis for the final positive seed selection.

%\textsc{Union Table.}

Next, correlating the Susceptibility List, RR sets, and Rescue List can form the Union Table. The details are described as follows: we first perform a conceptual \textit{Left‐Join} between the RR sets and the Susceptibility List by matching each RR set's head node $v$ with its corresponding $H(v)$ in Susceptibility List $H$. Then, in the following conceptual \textit{Right‐Join}, we connect the nodes in the Rescue List to the candidate nodes in the RR sets. This resulting Union Table allows for the efficient selection of positive seed nodes.
% With the following steps, we can select positive seeds.

\begin{algorithm}[!t]
\caption{Bidirectional Influence Sampling}
\label{alg:BIS}
\SetKwInOut{Input}{Input}
\SetKwInOut{Output}{Output}
\Input{Graph $G=(V,E)$, activation probabilities $p: E \to [0,1]$, negative seeds $S$, positive seed budget $k$, time constraint $\tau$, number of FIS $\phi$, number of RIS $\theta$, tie-breaking rule \textsc{TieRule}}
\Output{positive seeds set $A$}

$H \gets$ FIS($G,S,p,\phi,\tau$);

\tcp{Forward Influence Samping}

$\mathcal{RR} \gets$ RIS($G,S,p,\theta,\tau$)\;

\tcp{Reverse Influence Samping}
Pruning $\mathcal{RR}$ with $\textsc{TieRule}$\;
Construct \textit{Rescue List}, \textit{Union Table}\;
$A \gets \emptyset$\;
$T^{(0)}(u) \gets \textit{Rescue List}$\;
\For{$t\gets 1$ \textbf{to} $k$}{
    $a_t \gets Max(T^{(t-1)}(u))$\;
    $A \gets A \cup \{a_t\}$\;
    
    $\mathcal{P}_t \gets \{\mathcal{RR} \mid a_t\in\mathcal{C}_v^{(i)}\}$\;
    \ForEach{$u\in V\setminus\{S\cup A\}$}{
        $T^{(t)}(u) \gets\ T^{(t-1)}(u) - \sum_{\mathcal{P}_t} H(v) \cdot \mathbbm{1}_{\mathcal{C}_v^{(i)} }(u)$;
    }
    $T^{(t)}(a_t) \gets 0$\;
}
\Return{$A$}
\end{algorithm}

\subsubsection{Positive seeds selection}
%We select positive seeds through an iterative three-step process. Firstly, we identify the candidate node with the highest rescue ability and add it to the positive seed set. Next, we update the rescue ability of the remaining candidate nodes by utilizing the Union Table to subtract the contribution already covered by the newly selected seed. Finally, we reset the rescue ability of the selected seed to 0. By repeating these steps for $k$ rounds of iterations, a total of $k$ positive seeds are selected.

We select positive seeds with the following three steps.

\textit{1) Selecting.} Selecting the node with the strongest rescue ability in the current round.
\begin{equation}
    a_t = \mathop{\arg\max}_{u \in V \setminus S} T^{(t-1)}(u),
\end{equation}
where $T^{(t-1)} (u)$ denotes the rescue ability of $u$ in round $t-1$. 

Adding node $a_t$ into the positive seeds set $A$.
\begin{equation}
    A=A\cup \{a_t\}
\end{equation}

\textit{2) Updating.} Finding all RR sets where \(a_t\) appears.
\begin{equation}
    \mathcal{P}_t = \{ \mathcal{RR} \; |\; a_t \in \mathcal{C}_v^{(i)} \}
\end{equation}
For each remaining candidate node \(u\), the $T^{(t-1)}(u)$ in Rescue List is subtracted by the $H(v)$ in Susceptibility List of all RR sets that \(a_t\) has already contributed, to ensure that overlapping rescue contributions are discounted.
\begin{equation}
    T^{(t)}(u) = T^{(t-1)}(u) - \sum_{\mathcal{P}_t} H(v) \cdot \mathbbm{1}_{\mathcal{C}_v^{(i)} }(u)
\end{equation}

\textit{3) Removing.} Setting $T(u)$ in Rescue List of the selected positive seed $a_t$ in this round to $0$ as $T^{(t)}(a_t) = 0$.

By repeating the above steps for $k$ rounds, a total of $k$ positive seeds are selected.

We give the Bidirectional Influence Sampling algorithm, described in Algorithm~\ref{alg:BIS}, which begins by generating the Susceptibility List and RR sets through forward and reverse influence sampling (Lines~1-2). 
Then, pruning RR sets according to tie-breaking rules (Line~3). 
Next, constructing Rescue List and Union Table. The former is from Susceptibility List and RR sets, the latter is formed by correlating the Susceptibility List, RR sets, and Rescue List (Line~4). 
Next, the positive seed set $A$ is initialized (Line~5). 
During each of the following $k$ iterations (Line~6-13), the node $a_t$ with the highest $T(u)$ in Rescue List is selected as a positive seed, Rescue List are updated for the next iteration.
Finally, the final positive seed set $A$ is returned (Line~14).

\begin{figure*}[!t]
\centering
\subfigure{
    \label{net-sci-1}
    \includegraphics[width=0.99\linewidth]{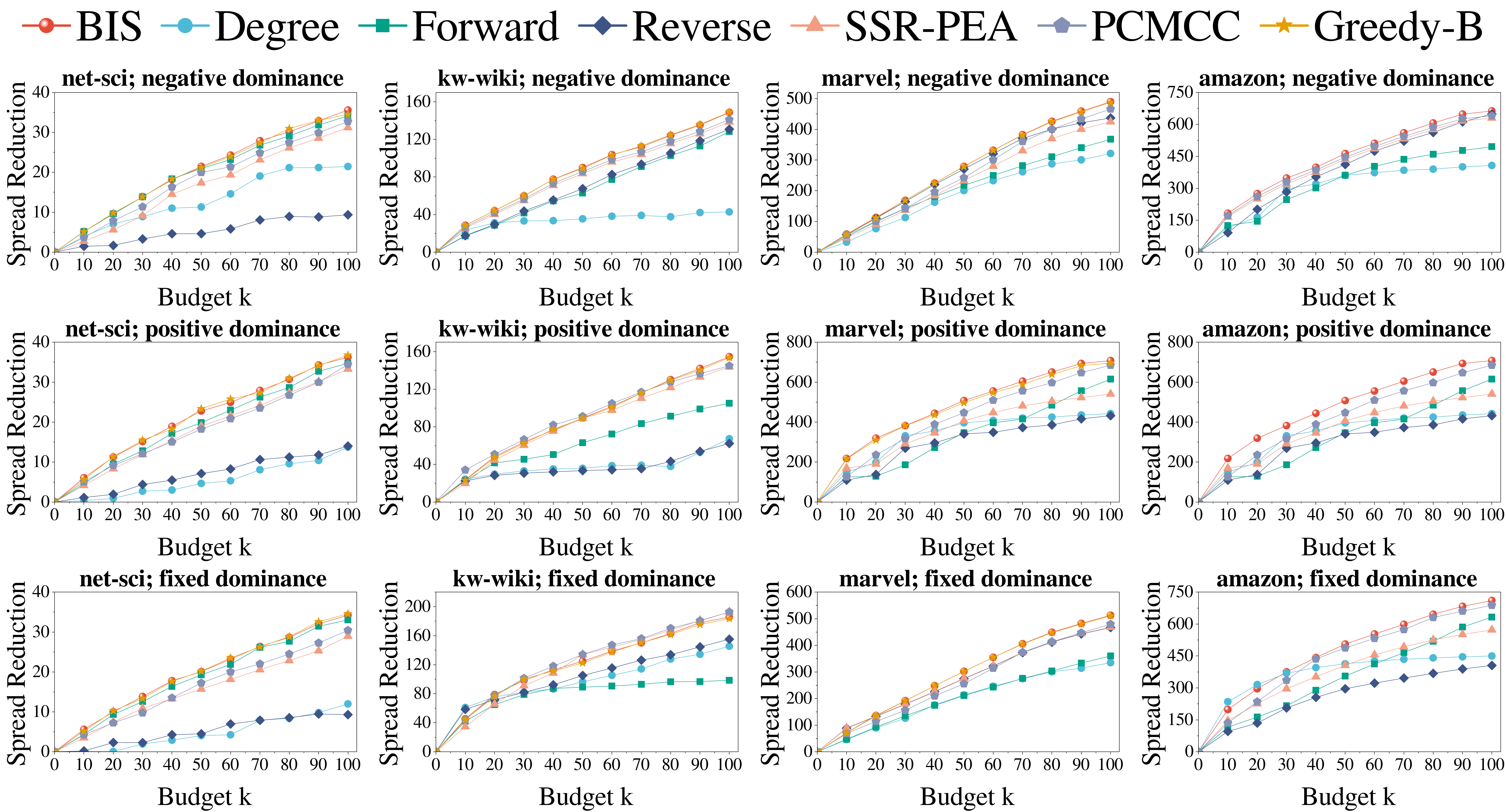}
}
\caption{The performence of proposed BIS under different tie-breaking rules.}
\label{default}
\end{figure*}

\subsection{Sampling Numbers and Lower Bound}
In this subsection, we derive the number of samples for FIS and RIS to ensure the approximation guarantee.

\begin{restatable}[FIS Bound]{theorem}{theoremthree}\label{thm:fisbound}
% For any node $u$, the output $\hat{H}(u)$ returned by Algorithm~\ref{alg:forward_sampling} is an unbiased estimator of influence spread $H(u)$. 
For any node $u$, the output $\hat{H}(u)$ returned by FIS is an unbiased estimator of $H(u)$. 
For any $0 < \delta_1 < 1$ and $0 < \psi < 1$, if the FIS sampling number $\phi$ satisfies
\begin{equation}
    \phi \geq \frac{3}{\psi^2 H(u)} \ln\left(\frac{2}{n \delta_1}\right),
\end{equation}
then with probability at least $1 - n\delta_1$, we have
\begin{equation}
    \Pr\left\{ \left| \hat{H}(u) - H(u) \right| \leq \psi \cdot H(u) \right\} \geq 1 -n \delta_1.
\end{equation}
\end{restatable}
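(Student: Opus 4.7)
The plan is to recognize $\hat{H}(u)$ as the empirical mean of $R$ independent Bernoulli trials with success probability exactly $H(u)$, establish unbiasedness by linearity of expectation, and then invoke the multiplicative Chernoff bound to obtain the claimed concentration, finally plugging in the stated lower bound on $R$.

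First I would observe that each Monte Carlo run of FIS (the outer loop in Algorithm~\ref{alg:forward_sampling}) is an independent realization of the TC-IC diffusion from the negative seed set $S$. Let $X_i = \mathbf{1}_{A^{(i)}}(u)$ denote the indicator that $u$ is activated in the $i$-th simulation. Because the edge-coin flips used across distinct simulations are mutually independent and identically distributed, the sequence $\{X_i\}_{i=1}^{R}$ is i.i.d.\ Bernoulli with parameter $H(u) := \Pr[u \in A^{(1)}]$, i.e.\ the very quantity FIS is meant to estimate. Unbiasedness then follows immediately from $\mathbb{E}[\hat H(u)] = \tfrac{1}{R}\sum_{i=1}^{R}\mathbb{E}[X_i] = H(u)$.

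Next I would apply the multiplicative Chernoff bound to the sum $Y = \sum_{i=1}^{R} X_i$, whose mean is $\mu := R\cdot H(u)$. For any $0<\gamma<1$, this yields
\begin{equation}
\Pr\!\bigl[\,|Y - \mu| \geq \gamma\mu\,\bigr] \;\leq\; 2\exp\!\Bigl(-\tfrac{\gamma^2\mu}{3}\Bigr) \;=\; 2\exp\!\Bigl(-\tfrac{\gamma^2 R\, H(u)}{3}\Bigr).
\end{equation}
Dividing the event $|Y-\mu|\leq \gamma\mu$ through by $R$ is exactly the event $|\hat H(u)-H(u)|\leq \gamma H(u)$. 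Substituting the stated threshold $R \geq \tfrac{3}{\gamma^2 H(u)}\ln\!\tfrac{2}{n\delta_3}$ collapses the right-hand side to $n\delta_3$, which delivers the per-node concentration bound; a final union bound over the at most $n$ candidate nodes in $V\setminus S$ accounts for the factor of $n$ in the failure probability stated in the theorem.

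The main obstacle I anticipate is not the Chernoff inequality itself but the careful verification that the $R$ simulations are genuinely i.i.d.\ copies of the same Bernoulli variable. This amounts to checking that (i) the activation probabilities $p(\cdot,\cdot)$ together with the horizon $\tau$ define a common product measure shared by every simulation, (ii) the internal randomness (the calls to $\mathrm{rand}()$ in Algorithm~\ref{alg:forward_sampling}) is refreshed independently across the outer loop, and (iii) the indicator $\mathbf{1}_{A^{(i)}}(u)$ is measurable only with respect to that simulation's randomness. Once these are pinned down, the remainder is a mechanical substitution into the standard multiplicative Chernoff tail, with the minor bookkeeping of tracking whether the claimed probability is $1-\delta_3$ (per node) or $1-n\delta_3$ (after the union bound) to match the theorem's phrasing.
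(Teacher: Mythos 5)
Your proposal matches the paper's proof essentially verbatim: the same Bernoulli indicators $X_i=\mathbf{1}_{A^{(i)}}(u)$, unbiasedness by linearity of expectation, the multiplicative Chernoff bound applied to $\sum_i X_i$, and direct substitution of the stated threshold on $R$. One small correction to your closing bookkeeping remark: the substitution alone already yields the per-node failure probability $n\delta_3$ claimed in the theorem (the paper performs no union bound over nodes), so the extra union bound over the $n$ candidates that you sketch would inflate the failure probability to $n^2\delta_3$ rather than recover the stated $1-n\delta_3$.
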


\begin{theorem}[RIS Bound]\label{lemma:RRSSB}
~\cite{tang2015influence} For any \(\varepsilon > 0\), \(\varepsilon_1 \in (0, \varepsilon / (1 - 1/e))\), and \(\delta_2, \delta_3 > 0\), define:
\[
\theta^{(1)} = \frac{2n \cdot \ln(1/\delta_2)}{\mathrm{OPT} \cdot \varepsilon_1^2}, \quad 
\theta^{(2)} = \frac{(2 - 2/e) \cdot n \cdot \ln(n/k) / \delta_3}{\mathrm{OPT} \cdot (\varepsilon - (1 - 1/e)\varepsilon_1)^2}.
\]
The following probabilistic guarantees of RIS sampling number $\theta$ hold:
\begin{enumerate}[(a)]
    \item For any fixed \(\theta \geq \theta^{(1)}\),
    \[
    \Pr_{\omega \sim \Omega}\left\{\hat{\sigma}(S^*, \omega) \geq (1 - \varepsilon_1)\hat{\mathrm{OPT}}\right\} \geq 1 - \delta_2.
    \]
    \item For any fixed \(\theta \geq \theta^{(2)}\), and every set \(S\) that is bad relative to \(\varepsilon\),
    \[
    \Pr_{\omega \sim \Omega}\left\{\hat{\sigma}(S, \omega) \geq (1 - 1/e)(1 - \varepsilon_1)\hat{\mathrm{OPT}}\right\} \leq \delta_3 / \binom{n}{k}.
    \]
\end{enumerate}
\end{theorem}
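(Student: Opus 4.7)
The plan is to follow the classical RIS concentration argument of Tang, Shi, and Xiao. The starting point is the observation that each RR set $R_i$ is sampled i.i.d.\ from $\Omega$ so that for every fixed seed set $S$, $\Pr[S \cap R_i \neq \emptyset] = \sigma(S)/n$. Consequently $\hat{\sigma}_\theta(S, R_0) = (n/\theta)\sum_{i=1}^{\theta}\mathbf{1}[S \cap R_i \neq \emptyset]$ is an unbiased estimator of $\sigma(S)$, and both parts (a) and (b) reduce to controlling the deviation of a sum of $\theta$ i.i.d.\ Bernoulli indicators from its mean.

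For part (a), I would apply a multiplicative Chernoff lower‐tail bound to $X = \sum_{i=1}^{\theta}\mathbf{1}[S^* \cap R_i \neq \emptyset]$, whose mean is $\mu = \theta \cdot \mathrm{OPT}/n$. The standard bound gives $\Pr[X \leq (1-\varepsilon_1)\mu] \leq \exp(-\varepsilon_1^2 \mu / 2)$; requiring this to be at most $\delta_1$ and solving for $\theta$ recovers exactly the threshold $\theta^{(1)} = 2n \ln(1/\delta_1)/(\mathrm{OPT}\cdot \varepsilon_1^2)$. Translating between $\mathrm{OPT}$ and $\hat{\mathrm{OPT}} = (n/\theta)\mathbb{E}[X]$ then yields the claimed one‐sided guarantee with probability at least $1 - \delta_1$.

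For part (b), the same Chernoff step applies to any \emph{fixed} bad set $S$, but the statement must hold for \emph{every} such $S$ simultaneously. I would therefore apply a union bound over the at most $\binom{n}{k}$ candidate seed sets of size $k$, which is precisely what produces the factor $\ln(\binom{n}{k}/\delta_2)$ inside $\theta^{(2)}$. The specific deviation parameter $\varepsilon - (1 - 1/e)\varepsilon_1$ is engineered so that a set whose true influence is at most $(1 - \varepsilon/(1-1/e))\mathrm{OPT}$ cannot have its empirical value inflated beyond $(1-1/e)(1-\varepsilon_1)\hat{\mathrm{OPT}}$ except with probability at most $\delta_2/\binom{n}{k}$; summing over all bad sets then gives the stated per‐set bound.

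The main obstacle is constant bookkeeping rather than any deep combinatorial insight: one has to select the correct Chernoff variant for each tail (the lower‐tail form with exponent $\eta^2\mu/2$ versus the upper‐tail form with exponent $\eta^2\mu/3$) and thread the two deviation parameters $\varepsilon$ and $\varepsilon_1$ through the algebra so that the thresholds land exactly on $\theta^{(1)}$ and $\theta^{(2)}$. A secondary subtlety is that $\hat{\mathrm{OPT}}$ is itself a random quantity derived from the same RR sample, so one must either condition on the sample or absorb its fluctuation into the same Chernoff inequality; once the i.i.d.\ structure of the RR sets is in hand, the remainder of the argument is essentially mechanical.
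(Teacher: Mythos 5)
The paper does not prove this theorem at all: it is imported verbatim (up to notation) from Tang, Shi, and Xiao's IMM paper via the citation, so there is no in-paper proof to compare against. Your sketch correctly reconstructs the standard argument from that source — unbiasedness of the RR-set coverage estimator, a lower-tail Chernoff bound for part (a) yielding exactly $\theta^{(1)}$, and an upper-tail bound with deviation $\varepsilon-(1-1/e)\varepsilon_1$ for part (b) — and would serve as a valid proof outline. One small correction: in part (b) the stated conclusion is already the \emph{per-set} probability $\delta_2/\binom{n}{k}$, which comes directly from the Chernoff bound once $\ln\bigl(\binom{n}{k}/\delta_2\bigr)$ is baked into $\theta^{(2)}$; the union bound over the $\binom{n}{k}$ candidate sets is not part of this theorem but is applied afterwards (in the paper's final approximation theorem) to convert the per-set bound into a total failure probability of $\delta_2$, so "summing over all bad sets then gives the stated per-set bound" has the logic reversed. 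Your observation that $\hat{\mathrm{OPT}}$ on the right-hand side is itself sample-dependent is a genuine imprecision in the paper's restatement (the original lemmas bound the empirical coverage against the deterministic $\mathrm{OPT}$), and handling it by conditioning as you propose is the right fix.
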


\begin{restatable}{theorem}{lemmathree}
For any \(\varepsilon > 0\), \(\varepsilon_1 \in (0, \varepsilon / (1-1/\varepsilon))\), and \(\delta_1, \delta_2, \delta_3 > 0\), if the following conditions hold:

\begin{enumerate}[(a)]
    \item \(\displaystyle \Pr\left\{ \left| \hat{H}(u) - H(u) \right| \leq \psi H(u) \right\} \geq 1 - n\delta_1\);
    
    \item \(\displaystyle \Pr_{\omega \to \Omega} \left\{ \hat{\sigma^-_\tau}(A^*, \omega) \geq (1-\varepsilon_1) \cdot \hat{\mathrm{OPT}} \right\} \geq 1 - \delta_2\);
    
    \item For every set \(A\) bad relative to \(\varepsilon\),
          \(\displaystyle \Pr_{\omega \to \Omega} \Bigl\{ \hat{\sigma^-_\tau}(A, \omega) \geq (1-1/e)(1-\varepsilon_1) \cdot \hat{\mathrm{OPT}} \Bigr\} \leq \delta_3 / \binom{n}{k}\);
    
    \item \(\hat{\sigma^-_\tau}(A,\omega)\) is non-negative, monotone, and submodular in \(A\) for all \(\omega \in \Omega\);

\end{enumerate}
with probability at least \(1 - n\delta_1 - \delta_2 - \delta_3\), the following inequality holds:
\begin{equation}
    \hat{\sigma^-_\tau}(A^g(\omega)) \geq \left(1 - \frac{1}{e} - \varepsilon \right)(1-\psi){\mathrm{OPT}}
\end{equation}
\end{restatable}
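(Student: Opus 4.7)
The plan is to combine three probabilistic guarantees through a union bound, then apply the standard submodular greedy argument on the empirical objective, and finally translate the empirical bound back into a guarantee on the true optimum with an extra $(1-\gamma)$ factor coming from the forward-sampling error.

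First, I would define three failure events: $E_1$ is the event that $\hat{\sigma^-_\tau}(A^*,\omega) < (1-\varepsilon_1)\hat{\mathrm{OPT}}$; $E_2$ is the event that some set $A$ that is bad relative to $\varepsilon$ has $\hat{\sigma^-_\tau}(A,\omega) \geq (1-1/e)(1-\varepsilon_1)\hat{\mathrm{OPT}}$; and $E_3$ is the event that $|\hat{H}(u)-H(u)| > \gamma H(u)$ for some $u \in V\setminus S$. By hypothesis (a) we have $\Pr(E_1) \leq \delta_1$; by hypothesis (b) together with a union bound over the at most $\binom{n}{k}$ bad sets we have $\Pr(E_2) \leq \delta_2$; and by hypothesis (d) together with a union bound over nodes we have $\Pr(E_3) \leq \delta_3$. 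A final union bound gives $\Pr(E_1 \cup E_2 \cup E_3) \leq \delta_1+\delta_2+\delta_3$, so I condition on the complement of all three events.

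Next, under this conditioning, I would invoke hypothesis (c) to conclude that $\hat{\sigma^-_\tau}(\cdot,\omega)$ is non-negative, monotone, and submodular, so the greedy procedure inside BIS returns $A^g(\omega)$ satisfying
\begin{equation}
\hat{\sigma^-_\tau}(A^g(\omega),\omega) \;\geq\; (1-1/e)\,\hat{\sigma^-_\tau}(A^*,\omega) \;\geq\; (1-1/e)(1-\varepsilon_1)\,\hat{\mathrm{OPT}},
\end{equation}
where the second inequality uses $\neg E_1$. If $A^g(\omega)$ were bad relative to $\varepsilon$, event $E_2$ would have occurred, contradicting $\neg E_2$. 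Hence $A^g(\omega)$ is not bad, i.e.\ its \emph{true} value satisfies $\sigma^-_\tau(A^g(\omega)) \geq (1-1/e-\varepsilon)\,\mathrm{OPT}$, using the choice $\varepsilon_1 < \varepsilon/(1-1/e)$ exactly as in the IMM analysis of Tang et al.

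Finally, I would translate this true-value bound into the empirical bound that actually appears in the statement. The empirical objective $\hat{\sigma^-_\tau}$ is a $\hat{H}$-weighted coverage over the pruned RR sets, while the corresponding true objective uses the exact susceptibility $H$. Under $\neg E_3$ we have $\hat{H}(v) \geq (1-\gamma)H(v)$ for every node $v$, so the linear dependence of $\hat{\sigma^-_\tau}$ on the $\hat H$ weights yields $\hat{\sigma^-_\tau}(A,\omega) \geq (1-\gamma)\,\sigma^-_\tau(A)$ uniformly in $A$. Combining with the previous bound at $A = A^g(\omega)$ gives $\hat{\sigma^-_\tau}(A^g(\omega)) \geq (1-1/e-\varepsilon)(1-\gamma)\,\mathrm{OPT}$, which is the claimed inequality, holding with the required probability $1-\delta_1-\delta_2-\delta_3$.

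The main obstacle I anticipate is the last step: formally arguing that the per-node multiplicative concentration $\hat{H}(v) \geq (1-\gamma)H(v)$ lifts to a uniform multiplicative $(1-\gamma)$ bound on the whole rescue-based empirical objective, simultaneously for every candidate set $A$ considered by the greedy. This is clean when $\hat{\sigma^-_\tau}(A,\omega)$ is literally a non-negative linear combination of indicator terms with $\hat{H}(v)$ coefficients, but it must be justified that the pruning rules associated with each tie-breaking regime preserve this linear structure so the same $(1-\gamma)$ slack carries through. Once that is in place, the rest is a routine union bound plus the standard IMM-style submodular greedy analysis.
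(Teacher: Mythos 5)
Your proposal is correct and follows essentially the same route as the paper: a union bound over the three failure events, the $(1-1/e)$ greedy guarantee from condition (c), the IMM-style use of conditions (a) and (b), and the $(1-\gamma)$ factor extracted from the multiplicative concentration of $\hat H$. The only cosmetic difference is that the paper applies the $(1-\gamma)$ conversion as $\hat{\mathrm{OPT}} \geq (1-\gamma)\mathrm{OPT}$ (again via linearity of the weighted coverage in the $\hat H$ weights) rather than as $\hat{\sigma^-_\tau}(A^g,\omega) \geq (1-\gamma)\sigma^-_\tau(A^g)$, and the "main obstacle" you flag --- rigorously separating the RR-sampling error from the FIS weight error --- is glossed over in the paper's own proof in exactly the same way.
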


Please refer to the Appendix A.4 for detailed proofs of the above theorems.

This concludes the description of BIS, which integrates forward and reverse influence sampling to achieve a $(1 - 1/e - \varepsilon)(1-\psi)$-approximation guarantee, provided that at least $\phi$ (Theorem~\ref{thm:fisbound}) and $\theta$ (Theorem~\ref{lemma:RRSSB}) samples are generated for FIS and RIS, respectively.

\section{Experiment}
In this section, we conduct extensive experiments to evaluate the performance of our proposed BIS algorithm.
\subsection{Dataset}
We evaluate our proposed method on four real-world network datasets: net-sci, kw-wiki, marvel, and amazon. These datasets, sourced from the KONECT~\footnote{http://konect.cc/networks}~\cite{konect}, represent diverse domains and structural characteristics. Their detailed statistics are listed in Table~\ref{dataset}.
\begin{table}[htbp]
\centering
\caption{Datasets.}
\label{dataset}
\begin{tabular}{l|cccc}
\hline
\textbf{Dataset} & \textbf{Nodes} & \textbf{Edges} & \textbf{Degree} & \textbf{Type} \\ \hline
% eu-email & 986 & 24,929 & 25.28 & Directed \\
net-sci & 1,461 & 2,742 & 3.79 & Undirected \\
kw-wiki & 8,623 & 160,255 & 37.16 & Undirected \\
marvel & 25,914 & 96,662 & 7.46 & Undirected \\
amazon & 400,727 & 3,200,440 & 7.99 & Directed \\ \hline
\end{tabular}
\end{table}

\subsection{Baselines and Parameter Settings}
\textbf{(1) Degree} selects the node with the highest out-degree in each round~\cite{adineh2018maximum}.
\textbf{(2) Forward} runs simulations from negative seeds and selects the node most frequently infected~\cite{gomez2016influence}.
\textbf{(3) Reverse} uses reverse sampling and selects the node covering the most RR sets~\cite{manouchehri2021theoretically}.
\textbf{(4) SSR-PEA} is a meta-heuristic algorithm based on progressive evolution~\cite{zhang2022search}.
\textbf{(5) PCMCC} combines community detection and population evolution~\cite{gu2025progressive}.
\textbf{(6) Greedy-B} uses CELF optimization~\cite{goyal2011celf++} to select nodes with the highest marginal blocking gain.

Following existing studies~\cite{chen2021negative,wu2023acceleration,lin2019biog,lin2019algorithm}, we vary $k$ from 10 to 100, set $|S| \in \{50, 100, 200\}$, and $\tau=3, 4, 5$.
Negative seeds are selected by Degree, IMM, and PageRank.
The tie-breaking rules include Negative Dominance, Positive Dominance, and Fixed Dominance.

\subsection{Experimental Results}

\subsubsection{Overall performance under different tie-breaking rules}
First, we evaluate the effectiveness of our proposed BIS algorithm under different tie-breaking rules. 
The number of negative seeds is fixed at 100, selected by Degree centrality, and the time constraint is set to 3.
Corresponding results are shown in Fig.~\ref{default}.
% , y-axis refers to \textit{Negative Spread Reduction} and x-axis denotes the budget number $k$ of positive seeds. 

As illustrated in the results, our proposed BIS method consistently outperforms all baselines across different tie-breaking rules and closely approaches the performance of Greedy-B.
Degree yields the poorest performance. 
Forward and Reverse methods exhibit inconsistent results and remain inferior to BIS. 
% Especially in the net-sci dataset, Reverse fails badly; in the kw-wiki, marvel and amazon datasets, Forward and Degree perform similarly. 
Notably, the consistently lower performance of Forward and Reverse compared to BIS highlights the significance of combining FIS and RIS.
As for recent baselines, SSR-PEA and PCMCC give better overall performance but still lag behind BIS, except on kw-wiki under positive and fixed dominance, where they marginally outperform BIS by approximately 2\%.

\subsubsection{Parameter sensitivity}
To evaluate the robustness of the BIS method, we conduct experiments on the net-sci network by varying three key parameters: \textit{1) negative seed selection method, 2) number of negative seeds $|S|$, and 3) time constraint $\tau$}. 
In addition, the tie-breaking rule is fixed to Negative Dominance.
% Generally, BIS is integrally ahead of the remaining methods under all parameter variations. 
% Due to space constraint, we only show the experimental results on the net-sci dataset.
% The rest of the experiments are provided in the Appendix.

\textit{1) Varying negative seed selection method.} In Fig.~\ref{vary method}, we vary the negative seed selection strategy using IMM and PageRank.
BIS consistently achieves the same level of performance as the Greedy-B. 
In contrast, other methods exhibit instability, with results fluctuating across different negative seed selection methods.
The relatively stable method, PCMCC, remains less effective than BIS.

\begin{figure}[!h]
\centering
\subfigure{
    \label{net-sci-4}
    \includegraphics[width=0.99\linewidth]{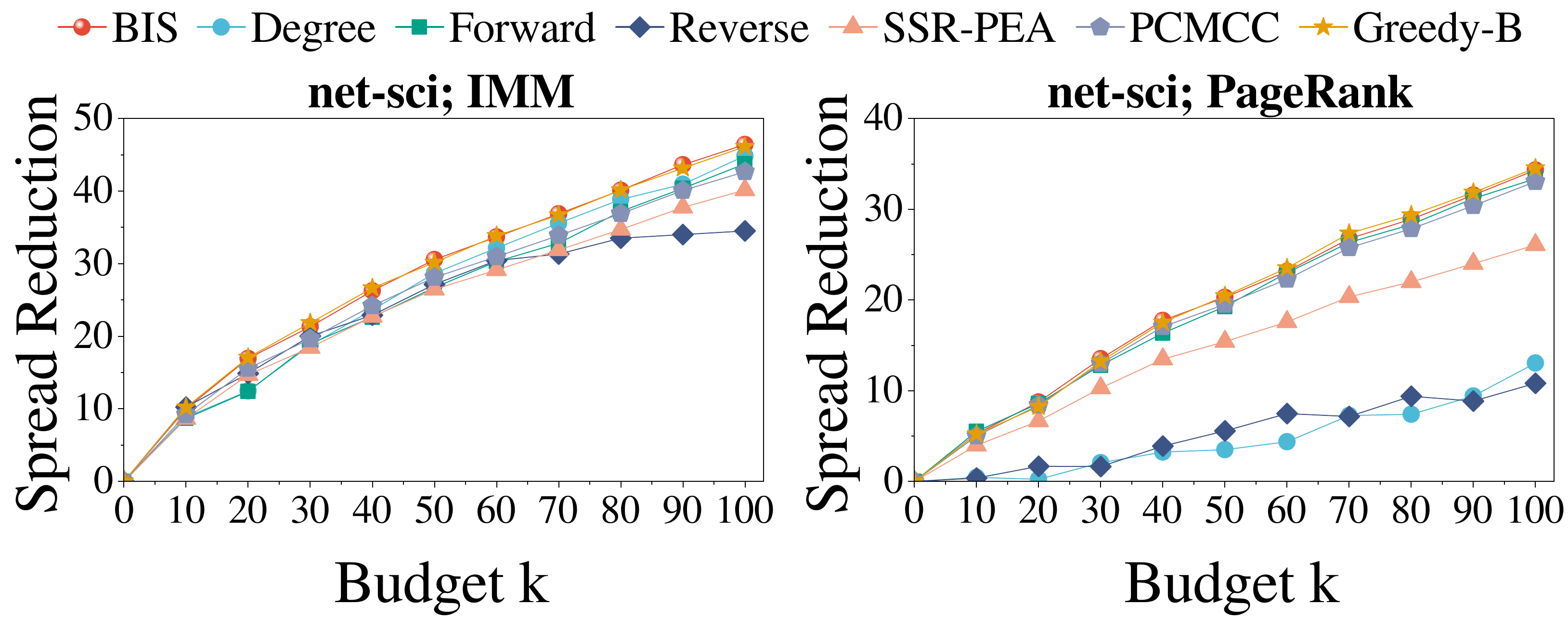}
}
\caption{$|S|=100$; IMM/PageRank; $\tau=3$.}
\label{vary method}
\end{figure}

\textit{2) Varying $|S|$.} In Fig.~\ref{vary S}, we vary the number of negative seeds to $|S|=50$ and $|S|=200$. 
It is clear that Degree and Reverse work poorly under those settings.
SSR-PEA and PCMCC work significantly weaker than BIS at $|S|=50$.
Forward is only better at $|S|=200$, but still lower than BIS.

\begin{figure}[htbp]
\centering
\subfigure{
    \label{net-sci-6}
    \includegraphics[width=0.99\linewidth]{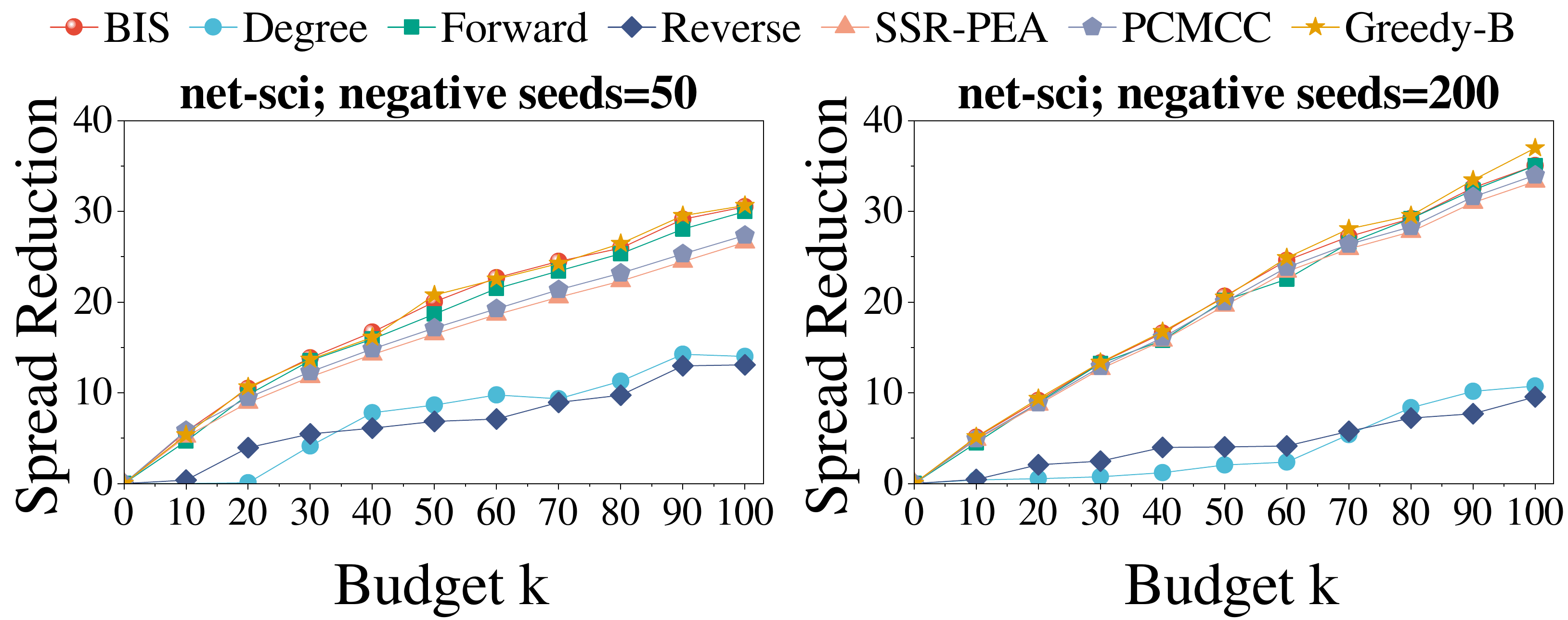}
}
\caption{$|S|\in \{50,200\}$; Degree; $\tau=3$.}
\label{vary S}
\end{figure}

\textit{3) Varying $\tau$.} In Fig.~\ref{vary tau}, we evaluate the impact of time constraint $\tau$ (set to 2 and 4). 
Degree and Reverse methods exhibit significant performance degradation. 
In contrast, BIS takes the lead over both SSR-PEA and PCMCC, with its advantage increasing further under $\tau=2$. 
The Forward method performs second best, after BIS, under this parameter.

\begin{figure}[!h]
\centering
\subfigure{
    \label{net-sci-8}
    \includegraphics[width=0.99\linewidth]{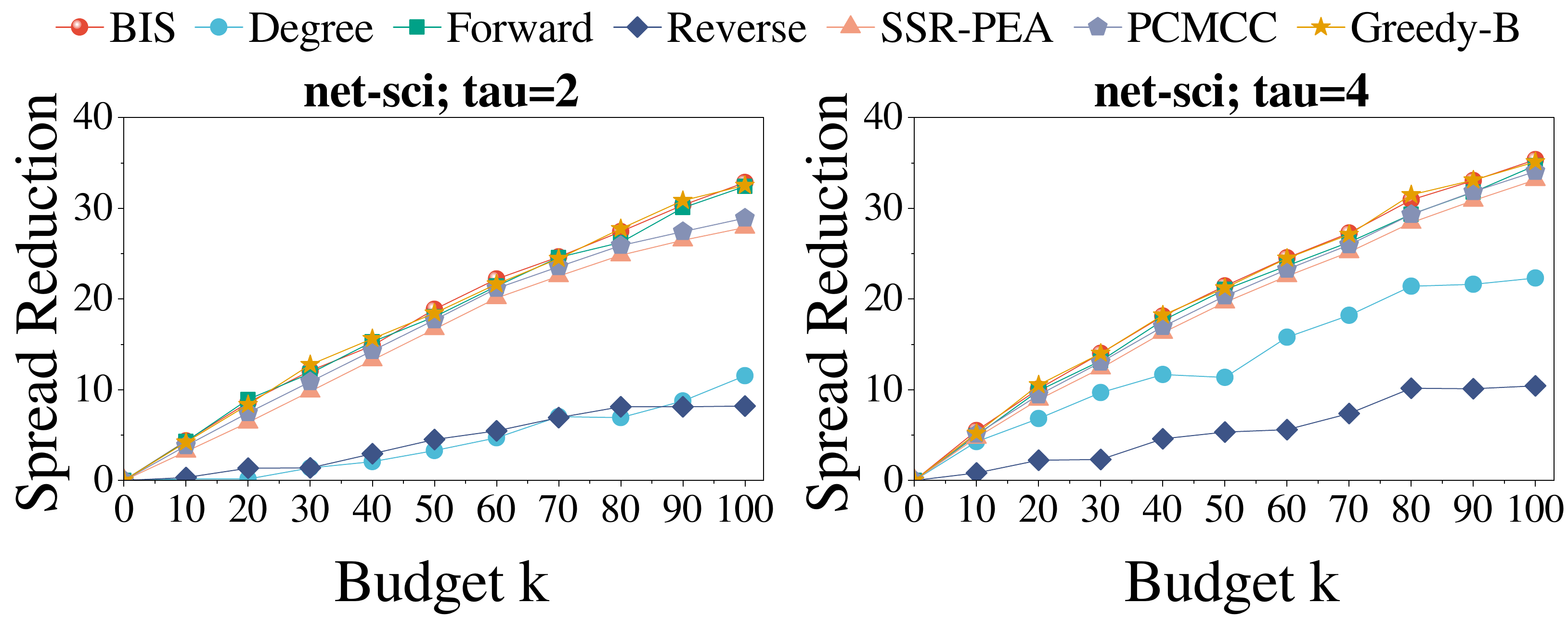}
}
\caption{$|S|=100$; Degree; $\tau=2,4$.}
\label{vary tau}
\end{figure}

\subsubsection{Efficiency evaluation}
In Fig.~\ref{fig:running time}, we tested the running time of the algorithms with $k = 100$, $|S|=100$ selected by Degree, $\tau=3$, and the tie-breaking rule follows the Negative Dominance.
% under parameters (Initial number of negative seeds:100; Methods of selecting negative seeds:degree; Time constraint:3; Tie-breaking rule:negative dominance).
% The experiments were performed ten times and the final results were averaged. 
% Due to the significant difference in runtime between Greedy-B and other methods, which makes it difficult to properly visualize the execution times of other approaches on a linear scale, we employ a logarithmic (log10) scale for the vertical axis in our figures to enable better visualization and comparison.
We employ a logarithmic (base 10) scale on the y-axis to enable better visualization and comparison.

% From the Fig.~\ref{fig:running time}, the 
Results show that the runtime of Forward and Reverse methods is on the same order of magnitude as BIS, with their combined cost approximately equal to that of BIS. 
It accords with the fact that BIS is a roughly combination of Forward and Reverse. 
% This is because the BIS method is a combination of Forward and Reverse. 
Degree is the fastest, as it relies solely on an intrinsic graph property.
SSR-PEA and PCMCC are one to two orders of magnitude slower than BIS, and the gap is even larger on small datasets. 
Evidently, Greedy-B is the most time-consuming and cannot be documented on large-scale datasets amazon. 
BIS matches Greedy-B in terms of performance while achieving at least three orders of magnitude speedup, demonstrating its superior efficiency.

\begin{figure}[!h]
    \centering
    \includegraphics[width=\linewidth]{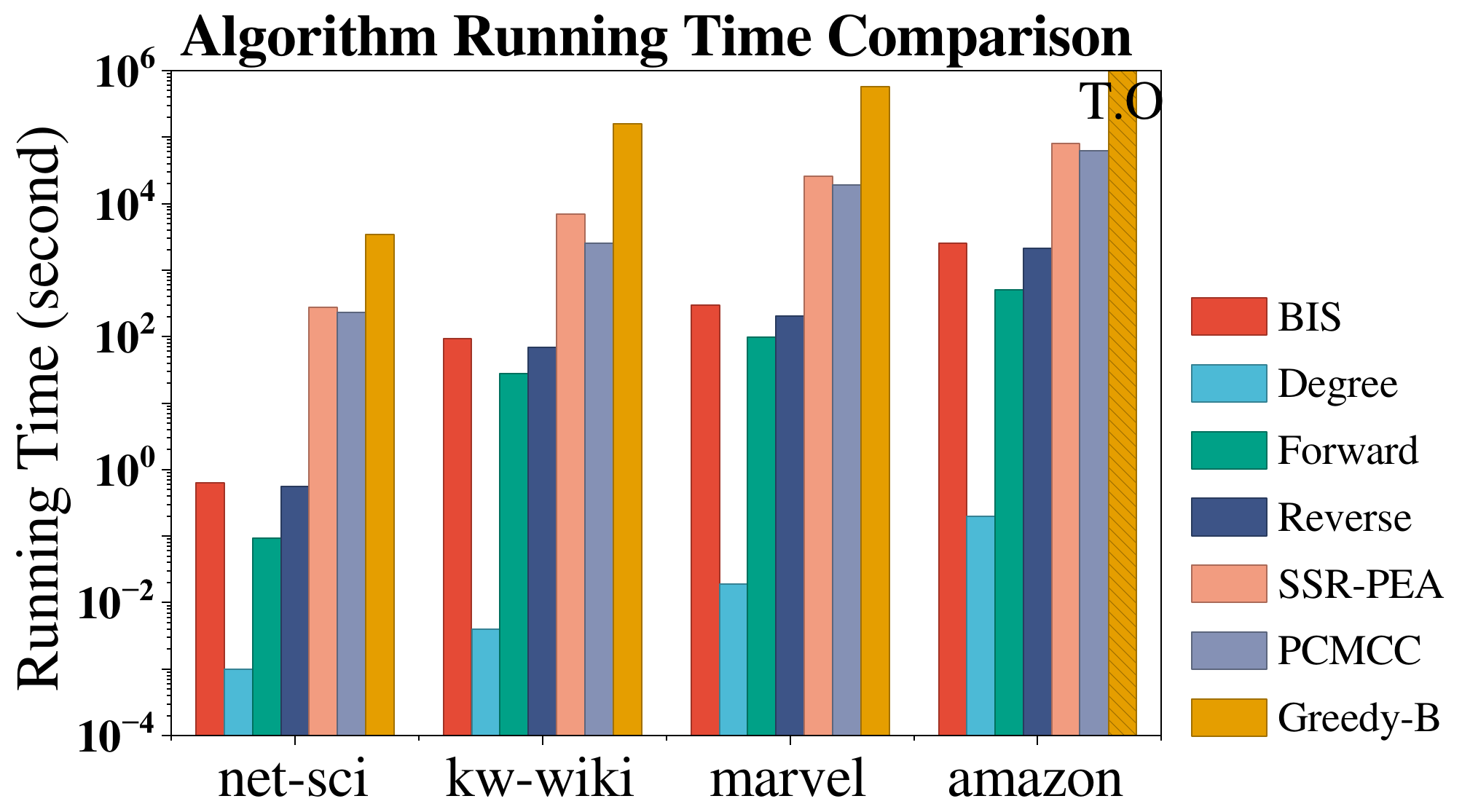}
    \caption{Running time of the algorithms.}
    \label{fig:running time}
\end{figure}
\section{Conclusion}

In this paper, we study the \textit{Adversarial Influence Blocking Maximization} (AIBM) problem with time constraint. 
We formally formulate the \textit{Time-Critical Adversarial Influence Blocking Maximization} (TC-AIBM) problem and prove the monotonicity and submodularity properties of its objective.
Leveraging these properties, we develop a highly efficient \textit{Bidirectional Influence Sampling} (BIS) algorithm to address the TC-AIBM problem.
Through combining Forward Influence Sampling (FIS) and Reverse Influence Sampling (RIS), BIS enables an approximation guarantee of $(1-1/e-\varepsilon)(1-\psi)$ to the optimal solution.
Comprehensive experiments on four real-world datasets demonstrated that BIS consistently outperforms state-of-the-art baselines across different settings, including varying negative seeds, time constraint, and tie-breaking rules.
Moreover, BIS achieves up to three orders of magnitude faster runtime than the strongest baseline, Greedy, while maintaining comparable effectiveness.

%% The file named.bst is a bibliography style file for BibTeX 0.99c
\bibliographystyle{named}
\bibliography{ijcai26}

\newpage
\clearpage
\appendix

\newpage

\section{Appendix}

\subsection{\texorpdfstring{Submodularity of $\sigma^-$}{Submodularity of sigma-}}

\begin{definition}[Live-edge Graph]\label{def:tc-live-edge}
Given a directed graph $G = (V, E)$ with activation probabilities $p: E \rightarrow [0,1]$ under the TC-IC model, a live-edge graph $L = (V, E_L)$ can be sampled by independently selecting each edge $e = (u, v) \in E$ to be "live" with its associated activation probability $p(e)$.
Consequently, the probability of generating a particular live-edge graph $L$ is given by $\Pr\{L|G\} = \prod_{e\in E} p(e,L,G)$, where $p(e,L,G) = p(e)$ if $e\in E_L$, and $p(e,L,G) = 1-p(e)$ otherwise. 
Let $\Gamma(S, L, \tau)$ denote the set of nodes that can be reached in the live-edge graph $L$ from the seed set $S$ via paths of length (number of hops) at most $\tau$. 
% Then the time-critical influence spread $\sigma(S, \tau)$ can be expressed as $\sigma(S, \tau)=\sum_{L\in \Omega(L,G)} \Pr\{L|G\}\cdot|\Gamma(S, L, \tau)|$, where $\Omega(L,G)$ denotes the sample space of all possible live-edge graphs generated based on $G$.
\end{definition}
Let $\Gamma(S, L, \tau)$ denote the set of nodes that can be reached in the live-edge graph $L$ from the seed set $S$ via paths of length (number of hops) at most $\tau$. 
Therefore, a randomly sampled live-edge graph can be viewed as a random diffusion instance under the time-critical constraint. A node $v$ is considered successfully activated by $S$ if and only if the shortest path distance from $S$ to $v$ in $L$ is less than or equal to $\tau$.

\lemmaone*

\begin{proof}

\textit{1. Sufficiency:} \textit{if $\exists v$ such that $v\in A$, satisfiying $|SP_L(v,w)|\leq|SP_{L}(n_s,w)|$ for all $n_s\in S$, $w$ is saved.}

Assume that such a $v$ exists, but $w$ is not saved. 
According to the rules of Positive Dominance, if a positive seed $v$ reaches a node $w$ before or at the same time as a negative seed $n_s$, then node $v$ will be affected by $v$ rather than $n_s$.
% the positive seed, {\em i.e.}, it is saved. 
Conversely, a negative seed node $n_s$ must arrive at $w$ before $v$ to stop $w$ from being saved.
In such a case, there exists a path $SP_{L}(n_s,w)$ that satisfies $|SP_{L}(n_s,w)|<|SP_L(v,w)|$. 
This contradicts the established condition that $|SP_L(v,w)|\leq|SP_L(n_s,w)|$, which means the assumption does not hold.
Therefore, the sufficiency holds.

\textit{2. Necessity:} \textit{if $w$ is saved, then $\exists v \in A \colon |SP_L(v,w)| \leq |SP_L(n_s,w)|$ for all $n_s\in S$. }

The above proposition is equivalent to if $\nexists v$ such that $v\in A$ and $|SP_L(v,w)|\leq|SP_{L}(n_s,w)|$, then $w$ is not saved.
Assume that such a $v$ does not exist, yet $w$ is still saved.
In such case, $\forall v\in A$, $|SP_L(v,w)|>|SP_{L}(n_s,w)|$.
According to the Positive Dominance, if node $w$ is saved, then there must be at least one positive seed that arrives preferentially or simultaneously than all negative seeds, which implies that $\exists v \in A \colon |SP_L(v,w)| \leq |SP_L(n_s,w)|$.
However, this contradicts our assumption that $\forall v\in A, |SP_L(v,w)|>|SP_{L}(n_s,w)|$, thus the necessity holds.
\end{proof}

% Lemma \ref{lem:reach} is explained in detail in the following.
Generally, for a selected positive seed $v$ and all nodes $n_s\in S$, Lemma \ref{lem:reach} establishes that:
\begin{equation}\label{2}
    \begin{gathered}
        % \exists v \in A, 
        |SP_L(v,w)| \leq |SP_L(n_s,w)| \Leftrightarrow w \text{ can be saved}.
    \end{gathered}
\end{equation}

% Above we show that the Positive Dominance satisfies Eq.~(\ref{2}), 
Similarly, for the Negative Dominance, it is clear that the following formulation holds:
\begin{equation}\label{3}
    \begin{gathered}
    % \exists v\in A, 
    |SP_L(v,w)|<|SP_{L}(n_s,w)| \Leftrightarrow\ w \text{ can be saved}
    \end{gathered}.
\end{equation}

Moreover, for the Fixed Dominance, the following formulation is satisfied:
\begin{align}\label{4}
% \begin{split}
% \exists v \in A, & 
&\begin{cases}
    |SP_L(v,w)| \leq |SP_{L}(n_s,w)|, \gamma(v) > \gamma(n_s) \\
    |SP_L(v,w)| < |SP_{L}(n_s,w)|, \gamma(v) < \gamma(n_s)
\end{cases} \nonumber \\
&\Leftrightarrow w \text{ can be saved}.
% \end{split}
\end{align}
where $\gamma$ is a predetermined random priority order function, such that $\gamma(v)$ and $\gamma(n_s)$ denote the priority ranks of $v$ and $n_s$ respectively.
% $\gamma_{An}(v^{-1})$ represents the order in which the $v^{-1}$ nodes are ranked when the nodes arrive at $An$ at the same time, the smaller the order, the higher the priority. Since it will not happen that two nodes have the same order when determining the randomized order, the equal sign is not taken under the condition of Eq.~(\ref{4}).

Similar to the proof of Lemma~\ref{lem:reach}, one can easily verify the correctness of Eq.~(\ref{3}) and Eq.~(\ref{4}).

\subsection{Forward Influence Sampling}

\begin{algorithm}
\caption{Forward Influence Sampling}
\label{alg:forward_sampling}

\SetKwInOut{Input}{Input}
\SetKwInOut{Output}{Output}

\Input{Graph $G = (V, E)$, negative seeds $S \subseteq V$, activation probabilities $p: E \to [0,1]$, number of simulations $\phi$, time constraint $\tau$}
\Output{Estimated susceptibility $H(v)$ for $v \in V \setminus S$}

Initialize $\hat H(v) \gets 0$ for all $v \in V \setminus S$\;

\For{$i \gets 1$ \KwTo $\phi$}{
    $Q \gets S$, $Q_{\text{new}} \gets S$, $t \gets 0$\;
    
    \While{$Q_{\text{new}} \neq \varnothing$ \textbf{and} $t < \tau$}{
        $Q_{\text{next}} \gets \varnothing$\;
        
        \For{$u \in Q_{\text{new}}$}{
            \For{$v \in \text{Neighbors}(u) \setminus Q$}{
                \If{$\text{rand}() \leq p(u,v)$}{
                    $Q_{\text{next}} \gets Q_{\text{next}} \cup \{v\}$\;
                    $\hat H(v) \gets \hat H(v) + 1$\;
                }
            }
        }
        $Q \gets Q \cup Q_{\text{next}}$\;
        $Q_{\text{new}} \gets Q_{\text{next}}$\;
        $t \gets t + 1$\;
    }
}

\For{$v \in V \setminus S$}{
    $H(v) \gets \hat H(v)/\phi$\;
}

\Return $\{ H(v) \mid v \in V \setminus S \}$\;
\end{algorithm}

Forward Influence Sampling (FIS) is described in Algorithm \ref{alg:forward_sampling}, aims to evaluate the susceptibility of non-negative seed node through multiple Monte Carlo simulations. The algorithm initializes the susceptibility value $H(v)$ for non-negative seed nodes and then carry out $\phi$ times simulations of information diffusion (Line 1-2). Each simulation starts with only the negative seed nodes activated, then Newly activated nodes try to infect their neighbors with the activation probability $p$, determined by the pre-defined edge weights. The propagation process continues until the timestep constraint $\tau$ reaching or no additional nodes become infected (Line 3-13). The total number of infected nodes $\hat{H}(v)$ in each simulation is recorded. After $\phi$ times simulations, we calculate the susceptibility $H(v)$ of each non-negative seed node $v$, which is defined as the ratio between $\hat{H}(v)$ and the total number $\phi$ of simulations (Line 14-16).

\subsection{Reverse Influence Sampling}

\begin{algorithm}[!h]
\caption{Reverse Influence Sampling}
\label{alg:reverse_sampling}
\SetKwInOut{Input}{Input}
\SetKwInOut{Output}{Output}
\Input{Graph $G = (V, E)$, negative seeds $S \subseteq V$, target node $v$,
       activation probabilities $p: E \to [0,1]$, 
       number of samples $\theta$, time constraint $\tau$}
\Output{Collection of RR sets $\mathcal{R}(v)$}

$\mathcal{R}(v) \gets \varnothing$\;
\For{$i \gets 1$ \KwTo $\theta$}{
    Randomly draw a node $v\in V\setminus S$ \;
    $R \gets \{v\}$, $Q \gets \{v\}$, $h \gets 0$, $stop \gets \text{False}$\;
    $attempted \gets \varnothing$;
    
    \While{$Q \neq \varnothing$ \textbf{and} $h < \tau$ \textbf{and not} $stop$}{
        $Q_{\text{next}} \gets \varnothing$\;
        
        \For{$u \in Q$}{
            \If{$u \in S$}{
                $stop \gets \text{True}$\;
            }
            \For{$(w,u) \in E$ \textbf{where} $(w,u) \notin attempted$}{
                $attempted \gets attempted \cup \{(w,u)\}$\;
                \If{$\text{rand}() \leq p(w,u)$}{
                    $Q_{\text{next}} \gets Q_{\text{next}} \cup \{w\}$\;
                }
            }
        }
        $R \gets R \cup Q_{\text{next}}$\;
        $Q \gets Q_{\text{next}}$\;
        $h \gets h + 1$\;
    }
    \If{$S\cap R \neq \varnothing$}{
        $\mathcal{R}(v) \gets \mathcal{R}(v) \cup \{R\}$\;
    }
}
\Return $\mathcal{R}(v)$;
\end{algorithm}

Reverse Influence Sampling (RIS) is a widely adopted technique in the field of Influence Maximization to fast estimate the influence spread of any given seed set.
The procedure of RIS used in our paper is described in Algorithm~\ref{alg:reverse_sampling}, which aims to generate adequate Reverse Reachable sets (RR sets) by repeatedly sampling from different non-negative seed nodes. 

Algorithm \ref{alg:reverse_sampling} first initializes the RR sets $\mathcal{R}(v)$ (Line 1). 
Then, begin with a randomly drawn node $v$, each sample simulates reverse propagation from $v$ until: 
(1) no additional node is sampled during the reverse diffusion; 
(2) timestep reaches the constraint $\tau$; or 
(3) a negative seed is encountered. 
For every timestep $h$, we record whether the predecessor node $u$ is a negative seed, preventing the next round of propagation if true (Lines 6-10).
Besides, according to the IC model, each newly activated node has only one chance to try to influence its predecessors with the activation probability $p$ (Lines 11-17). 
Finally, only RR sets containing negative seeds are returned as the output (Lines 18-20).

\subsection{Lower Bound an Sampling Numbers}

\begin{fact}\label{Chernoff 1}
\textit{(Chernoff Bound).} 
Let $X_1, X_2, \dots, X_t$ be $t$ mutually independent random variables taking values in $[0,1]$, and let $X = \sum_{i=1}^{t} X_i$. 
Then for any $0 < \psi < 1$, we have
\begin{equation*}
    \Pr\left\{ \left| X - \mathbb{E}[X]\right| \geq \psi \cdot \mathbb{E}[X] \right\} \leq 2 \exp\left( -\frac{\psi^2 \cdot \mathbb{E}[X]}{3} \right)
\end{equation*}
\end{fact}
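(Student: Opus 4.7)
The plan is to prove the two-sided bound by separately establishing the upper-tail estimate $\Pr\{X - \mathbb{E}[X] \geq \gamma \mathbb{E}[X]\} \leq \exp(-\gamma^2 \mathbb{E}[X]/3)$ and the lower-tail estimate $\Pr\{\mathbb{E}[X] - X \geq \gamma \mathbb{E}[X]\} \leq \exp(-\gamma^2 \mathbb{E}[X]/2)$, then combining them via a union bound, which absorbs the factor of $2$ on the right-hand side. Since $\gamma^2/3 \leq \gamma^2/2$, using $\exp(-\gamma^2 \mathbb{E}[X]/3)$ as a common upper bound on both tails is valid.

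For the upper tail I would use the exponential moment (Chernoff--Cramér) method. Namely, for any $s > 0$, Markov's inequality applied to $e^{sX}$ gives $\Pr\{X \geq (1+\gamma)\mu\} \leq e^{-s(1+\gamma)\mu}\,\mathbb{E}[e^{sX}]$, where $\mu = \mathbb{E}[X]$. By independence, $\mathbb{E}[e^{sX}] = \prod_{i=1}^t \mathbb{E}[e^{sX_i}]$. Because each $X_i \in [0,1]$, convexity of $x \mapsto e^{sx}$ yields $e^{sX_i} \leq 1 + X_i(e^s - 1)$, hence $\mathbb{E}[e^{sX_i}] \leq 1 + \mathbb{E}[X_i](e^s - 1) \leq \exp(\mathbb{E}[X_i](e^s - 1))$. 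Multiplying over $i$ gives $\mathbb{E}[e^{sX}] \leq \exp(\mu(e^s - 1))$. Choosing $s = \ln(1+\gamma)$ then yields the classical bound $\Pr\{X \geq (1+\gamma)\mu\} \leq \bigl(e^{\gamma}/(1+\gamma)^{1+\gamma}\bigr)^{\mu}$. For the lower tail the argument is symmetric with $s < 0$, producing $\Pr\{X \leq (1-\gamma)\mu\} \leq \bigl(e^{-\gamma}/(1-\gamma)^{1-\gamma}\bigr)^{\mu}$.

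The technical obstacle is squeezing both raw bounds into the clean form $\exp(-\gamma^2\mu/3)$. This reduces to verifying the analytic inequalities $(1+\gamma)\ln(1+\gamma) - \gamma \geq \gamma^2/3$ and $(1-\gamma)\ln(1-\gamma) + \gamma \geq \gamma^2/2$ for all $\gamma \in (0,1)$. I would handle these by defining $f(\gamma)$ equal to the left-hand side minus the right-hand side in each case, checking $f(0) = f'(0) = 0$, and then showing $f''(\gamma) \geq 0$ on $(0,1)$ by direct differentiation; monotonicity of $f'$ then forces $f \geq 0$. The lower-tail constant $2$ is tighter, but replacing it with $3$ preserves validity, which is what licenses the single exponent $\gamma^2\mu/3$ in the final statement.

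Finally, combining the two tail bounds via $\Pr\{|X - \mu| \geq \gamma\mu\} \leq \Pr\{X \geq (1+\gamma)\mu\} + \Pr\{X \leq (1-\gamma)\mu\} \leq 2\exp(-\gamma^2\mu/3)$ completes the proof. No additional machinery beyond Markov's inequality, independence, and elementary calculus is needed; the whole argument is self-contained given the boundedness $X_i \in [0,1]$ assumed in the statement.
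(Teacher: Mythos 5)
First, note that the paper does not actually prove this statement: it is recorded as Fact~1, a standard multiplicative Chernoff bound quoted from the literature and used only as an ingredient in the proof of the FIS sampling bound (Theorem~2). So any correct derivation here is strictly more than the paper supplies, and your overall strategy --- the exponential moment method, the convexity bound $e^{sX_i}\le 1+X_i(e^s-1)$ for $[0,1]$-valued variables, the optimal choices $s=\ln(1+\gamma)$ and $s=\ln(1-\gamma)$, and a union bound over the two tails --- is exactly the textbook route and is sound in outline.

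There is, however, one concrete step that fails as written. For the upper tail you must show $f(\gamma)=(1+\gamma)\ln(1+\gamma)-\gamma-\gamma^2/3\ge 0$ on $(0,1)$, and you propose to do so by checking $f(0)=f'(0)=0$ and then $f''\ge 0$ on $(0,1)$. But $f'(\gamma)=\ln(1+\gamma)-2\gamma/3$ and $f''(\gamma)=\tfrac{1}{1+\gamma}-\tfrac{2}{3}$, which is negative for $\gamma>1/2$; this comparison function is not convex on the whole interval, so monotonicity of $f'$ cannot be invoked. The inequality itself is still true and the repair is short: $f'$ increases on $(0,1/2)$ and decreases on $(1/2,1)$, yet $f'(1)=\ln 2-2/3>0$, so $f'>0$ throughout $(0,1]$ and hence $f$ increases from $f(0)=0$. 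Alternatively, use the standard sharper estimate $(1+\gamma)\ln(1+\gamma)-\gamma\ge \gamma^2/(2+2\gamma/3)$, whose right-hand side is at least $\gamma^2/3$ for $\gamma\le 1$. Your second-derivative argument does work verbatim for the lower-tail function $(1-\gamma)\ln(1-\gamma)+\gamma-\gamma^2/2$, whose second derivative is $\gamma/(1-\gamma)\ge 0$. With that single repair the proof is complete.
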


\theoremthree*
\begin{proof}
Let $X_i$ denote whether node u is activated in the $i$-th simulation of Algorithm \ref{alg:forward_sampling} Then:

\begin{itemize}
  \item[(a)] $X_i \in \{0,1\}$,
  
  \item[(b)] $\mathbb{E}[X_i] = H(u)$,
  
  \item[(c)] $\hat{H}(u) = \frac{\sum_{i=1}^\phi X_i}{\phi}$, where $\phi$ is the total number of simulations.
\end{itemize}

Let $X = \sum_{i=1}^\phi X_i = \hat{H}(u) \cdot \phi$, then:
\begin{equation}
\mathbb{E}[\hat{H}(u)] = \mathbb{E} \left[ \frac{\sum_{i=1}^\phi X_i}{\phi} \right] 
= \frac{1}{\phi} \sum_{i=1}^\phi \mathbb{E}[X_i] 
= H(u),
\end{equation}

Therefore, $\hat{H}(u)$ is an unbiased estimator of $H(u)$.

Then, we prove the number of simulations required to achieved the claimed guarantee.
% \begin{align*}
% &\Pr\left\{ \left|\hat{H}(u) - H(u)\right| \leq \gamma \cdot H(u) \right\} \\
% &= \Pr\left\{ \left|R \cdot \hat{H}(u) - R \cdot H(u)\right| \leq \gamma \cdot R \cdot H(u) \right\} \\
% &= \Pr\left\{ \left|X - \mathbb{E}[X]\right| \leq \gamma \cdot \mathbb{E}[X] \right\} \\
% &= 1 - \Pr\left\{ \left|X - \mathbb{E}[X]\right| > \gamma \cdot \mathbb{E}[X] \right\}.
% \end{align*}
By applying Fact~\ref{Chernoff 1}, we derive that:
\begin{equation*}
\small
    \Pr\left\{ \left| \hat{H}(u) - H(u) \right| \leq \psi \cdot H(u) \right\} 
\geq 1 - 2 \exp\left( -\frac{\psi^2 \phi \cdot H(u)}{3} \right).
\end{equation*}

To achieve the confidence level with $1-n\delta_3$, the required number of simulations $\phi$ must satisfy:
\begin{align*}
&1 - 2 \exp\left( -\frac{\psi^2 \phi \cdot H(u)}{3} \right) \geq 1 - n\delta_3 \\
&\Rightarrow \exp\left( -\frac{\psi^2 \phi \cdot H(u)}{3} \right) \leq \frac{n\delta_3}{2} \\
&\Rightarrow -\frac{\psi^2 \phi \cdot H(u)}{3} \leq \ln\left( \frac{n\delta_3}{2} \right) \\
&\Rightarrow \frac{\psi^2 \phi \cdot H(u)}{3} \geq \ln\left( \frac{2}{n\delta_3} \right) \\
&\Rightarrow \phi \geq \frac{3}{\psi^2 H(u)} \ln\left( \frac{2}{n\delta_3} \right).
\end{align*}

This completes the proof.
\end{proof}

\lemmathree*
\begin{proof}
By condition (a), with probability at least $1-n\delta_1$, we have:
\begin{align*}
\hat{\mathrm{OPT}} &= \sum_{i=0}^{t-1} \hat{H}(u_i), \quad t \in \{ i | A \cap R_i \neq \varnothing \} \\
&\geq \sum_{i=0}^{t-1} (1-\psi)H(u_i) = (1-\psi)\mathrm{OPT}
\end{align*}
where $R_i$ is the $i$-th RR set generated in RIS.

By condition (b), with probability at least $1-\delta_2$, it holds:

\begin{equation*}
\hat{\sigma^-_\tau}(A^*,\omega) \geq (1-\varepsilon_1)\hat{\mathrm{OPT}},
\end{equation*}
where $\mathrm{OPT}$ denotes the optimal solution in the sampling process, leading to
\begin{equation*}
    \left(1-\frac{1}{e}\right)\hat{\sigma^-_\tau}(A^*,\omega) \geq \left(1-\frac{1}{e}\right)(1-\varepsilon_1)\hat{\mathrm{OPT}}.
\end{equation*}

By condition (c), there exists a bad $A$ such that
\begin{align}
\label{condi b}
\hat{\sigma^-_\tau}(A,\omega) \geq (1-1/e)(1-\varepsilon_1)\hat{\mathrm{OPT}}
\end{align}
with a probability at most $\delta_3$. 
Since there are at most $\binom{n}{k}$ $k$-element sets, the probability that each bad $A$ satisfies Eq.~\eqref{condi b} does not exceed $\delta_3/\binom{n}{k}$.

By condition (d), since $\hat{\sigma^-_\tau}(A,\omega)$ is non-negative, monotone, and submodular with respect to $A$, the greedy solution satisfies:
\[
\hat{\sigma^-_\tau}(A^g(\omega)) \geq \left(1-\frac{1}{e}\right)\hat{\sigma^-_\tau}(A^*,\omega)
\]

Finally, by applying the union bound over all conditions, with probability at least $1-n\delta_1-\delta_2-\delta_3$, we obtain:
\[
\hat{\sigma^-_\tau}(A^g(\omega)) \geq \left(1-\frac{1}{e}-\varepsilon\right)(1-\psi)\mathrm{OPT},
\]
and thus concludes the proof.
\end{proof}

% \newpage
% \section{Algorithms}

\end{document}